\newtheorem{theorem}{Result}[section]
\newtheorem{corollary}[theorem]{Corollary}
\begin{document}

\title{Hierarchy of Qubit Dynamical Maps in the Presence of Symmetry and Coherence}

\author{Unnati Akhouri}
\email{uja5020@psu.edu}
\affiliation {\it Institute for Gravitation and the Cosmos, The Pennsylvania State University, University Park, PA 16802, USA and\\
Department of Physics, The Pennsylvania State University, University Park, PA 16802, USA.}

\date{\today}

\begin{abstract}  
We investigate the relationship between symmetries and thermodynamic transformations by analyzing how global energy conservation and coherence resources affect the local dynamics of subsystems. We prove that U(1) conservation fundamentally constrains quantum thermodynamic operations through charge conservation of Pauli strings. Our no-go theorem shows that U(1) dynamics cannot generate local coherence from diagonal thermal states, restricting thermal operations to phase-covariant maps. Breaking this hierarchy requires environmental coherence with odd charge parity that couples unequal energy states. We establish the minimal resource requirements through a two-qubit construction that achieves Gibbs-preserving transformations beyond thermal operations. We demonstrate measurable thermodynamic advantages in work extraction and state distinguishability, revealing the fundamental role of quantum coherence in enhancing thermodynamic performance.
\end{abstract}

\maketitle 

\section{Introduction}
The dynamics of a subsystem interacting with a larger environment is shaped by its interactions and the environment's state  \cite{lidar2020lecturenotestheoryopen}. Constraints imposed on these interactions or states inherently restrict the subsystem's evolution \cite{RevModPhys.91.025001}.  To fully characterize the evolution of the subsystem, it is crucial to analyze the available resources and the operations allowed within the given set of constraints \cite{Kolchinsky_2025}. Quantum resource theories provide a robust framework for identifying the possible transformations under constraints and the specific resources needed to achieve transformations that would otherwise be unattainable \cite{RevModPhys.91.025001,Lostaglio_2019}.\\
A central question that arises is what determines which states are accessible and what resources are required to unlock a broader range of states \cite{Brand_o_2015}. In the thermodynamic context, this includes frameworks that quantify coherence under symmetry constraints \cite{Brand_o_2015,spekkens2007evidence,marvian2016coherence} and capture the role of catalytic coherence and thermo-majorization in state transformations \cite{Brand_o_2015}.\\
In quantum thermodynamics, thermal operations—arising from energy-conserving interactions with thermal reservoirs—have been extensively studied as the paradigmatic class of operations. These operations conserve the total system-environment energy and do not require external work. For a system whose energy basis aligns with the computational basis, these operations naturally emerge as phase-covariant dynamical maps that describe the effective evolution of a subsystem within the full system-environment composite \cite{Faist_2015,Filippov2020,ROGA2010311,Horodecki_2013}. Phase-covariance suppresses creation of coherence between energy eigenstates thereby limiting the accessible states. Any state possessing coherence or athermality is considered a resource within this theory since it allows accessing a broader class of states \cite{Horodecki_2013}.\\
However, thermal operations are insufficient to describe all thermodynamically relevant processes. Using these operations as a baseline, we can ask what resources are necessary to go beyond thermal operations? In this work, we show that U(1) symmetry corresponding to the energy-conservation, with the interplay of states, creates a rigid hierarchical structure in the space of allowable transformations for qubit systems. \\
 If the environment lacks certain classes of  coherences, U(1)-symmetric evolution restricts dynamics to phase-covariant thermal operations, establishing that special classes of coherences are needed to go beyond thermal operations. In particular, 
 we show that coherence terms mixing states with unequal energy break phase-covariance.  We clarify how distinct classes of coherences affect different components of dynamical map describing the effective evolution of the subsystem. We prove our results through charge conservation of Pauli strings and connect it to the emergence of thermodynamic hierarchies. Unlike previous phenomenological approaches, our symmetry-based analysis reveals the fundamental mathematical structure underlying these limitations.\\
We use our methodology to demonstrate the generation of Gibbs-preserving (GP) maps that have been previously studied as a class distinct from thermal operations. These dynamics can generate coherences between energy eigenstates making them particularly profound in the quantum regime, where coherence serves as a genuine resource that can enhance thermodynamic performance 
\cite{shiraishi2024quantumthermodynamicscoherencecovariant, Faist_2015,ende2023exploringlimitsopenquantum,sagawa2020entropydivergencemajorizationclassical,PhysRevLett.134.170201,Shiraishi_2025} While previous work has established that Gibbs-preserving maps outperform thermal operations \cite{sagawa2020entropydivergencemajorizationclassical, Faist_2015}, the minimal resources required to generate these effective dynamics under global unitary dynamics have remained unclear. Our work aims to fill this gap by providing explicit no-go theorems and constructive protocols to generate GP dynamics. \\
To this end, we present systems of two, three qubits evolving with $\sigma_X\sigma_X+\sigma_Y\sigma_Y$ interaction. We adopt the formalism of completely-positive, trace-preserving (CPTP) maps for subsystems of qubits within the full system. The maps provide a clear description of how a system evolves under the influence of its environment, accounting for dissipation and decoherence. We demonstrate how specific correlations can be harnessed to engineer different classes of subsystem dynamics including Gibbs-preserving dynamics. \\
A comprehensive understanding of the limitations and capabilities of global energy preserving operations is crucial for developing quantum devices and protocols, particularly within the noisy intermediate-scale quantum (NISQ) era  \cite{roadmap}. This is especially relevant for thermodynamics at the nano-scale where quantum coherences can enhance the work extraction \cite{Horodecki_2013}. Our minimal three-qubit construction provides an experimentally realizable protocol for going beyond thermal operations, with explicit parameter relationships that can guide implementation on current quantum platforms. We quantify thermodynamic advantages through enhanced work extraction capabilities and increased state distinguishability.\\
In Section \ref{Sec:theory_strings}, we review the theoretical framework of Pauli strings and unitaries used throughout the paper. In Section \ref{Sec:QubitDynMap}, we review the framework of qubit dynamical maps in affine form. In Section \ref{Sec:Hierarchy}, we categorize the hierarchy of thermodynamics operations and in Section \ref{Sec:min_model} we introduce the minimal model used in the paper. 
Finally, in Section \ref{Sec:Results} we present our results for the classes of maps and demonstrate utility of going to classes of maps beyond thermal operations. 
\begin{figure}[htbp]
    \centering
    \begin{tikzpicture}[scale=1]
        \begin{scope}[xshift=-2.5cm]
            \draw[thick] (-2,-1.5) rectangle (1.5,1);
            \node at (1.5,1.2) {$\Phi$};

            \draw[thick, gray,fill=gray,fill opacity=0.2] (-0.1,-0.2) ellipse (1.4cm and 1.cm);
            \node at (0.9,0.7) {$\Phi_E$};
            
            \draw[thick, cyan,fill=cyan,fill opacity=0.2] (-0.8,-0.2) ellipse (1.1cm and 0.8cm);
            \node at (0.3,0.52) {$\Phi_{GP}$};

            \draw[thick, cyan,fill=cyan,fill opacity=01] (-0.7,0.45) circle (.05cm and 0.05cm);
            \node at (0.3,0.52) {$\Phi_{GP}$};

            \draw[thick, magenta,fill=magenta,fill opacity=01] (-0.4,-0.4) circle (.05cm and 0.05cm);
            \node at (0.3,0.52) {$\Phi_{GP}$};

            \draw[thick, orange,fill=orange,fill opacity=01] (-1.7,0.1) circle (.05cm and 0.05cm);
            \node at (0.3,0.52) {$\Phi_{GP}$};

            \draw[thick, black,fill=black,fill opacity=01] (0.5,0.1) circle (.05cm and 0.05cm);
            \node at (0.3,0.52) {$\Phi_{GP}$};
            
            \draw[thick, magenta,fill=magenta,fill opacity=0.2] (-0.7,-0.2) ellipse (0.8cm and 0.5cm);
            \node at (-0.7,0) {$\Phi_{PC}$};
        \end{scope}
        
        \begin{scope}[xshift=2.5cm]
            \draw[thick] (-2,-1.5) rectangle (1.5,1);
            \node at (1.5,1.2) {$\Phi$};
            
            \draw[thick, cyan,fill=cyan,fill opacity=0.2] (-0.3,-0.3) ellipse (1cm and 0.8cm);
            \draw[thick, magenta,fill=magenta,fill opacity=0.2] (-0.3,-0.3) ellipse (1cm and 0.8cm);
            \node at (0,0.7) {$\Phi_{GP}$};
            \node at (0,-1.3) {$\Phi_{PC}$};
        \end{scope}
    \end{tikzpicture}
\caption{Classically, Gibbs-preserving and phase-covariant maps are identical. However, in quantum mechanics due to the presence of coherences the phase-covariant maps arising from thermal operations are different from the Gibbs-preserving maps and dynamical maps follow $\Phi_{PC} ^{\textcolor{magenta}{\bullet}}\subset \Phi_{GP(E)}^{\textcolor{cyan}{\bullet}} \subset \Phi_E ^{\textcolor{black}{\bullet}}\subset\Phi $. This means Phase-covariant maps are a proper subset of Gibbs-preserving maps however, there are Gibbs-preserving maps that are not phase-covariant. In this paper we will characterize how to go from covariant operations to a phase-covariant map and what extra resources are needed to get to a Gibbs-preserving map. }
\label{fig:PC_GP_map_sets}
\end{figure}
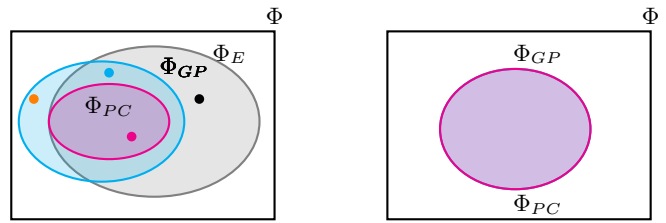
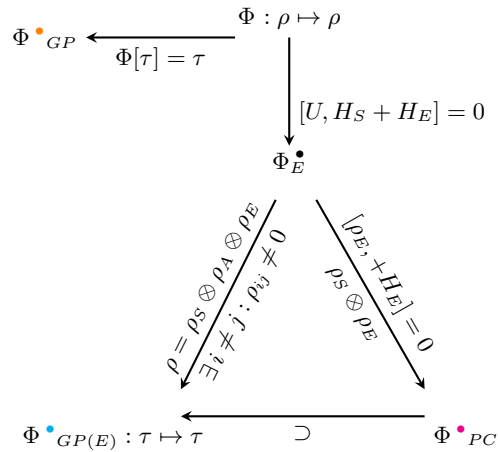
\begin{figure}[htbp]
    \centering
\begin{tikzpicture}[scale=1.8]
    \coordinate (O) at (1, 2.8);
    \coordinate (GP) at (-0.5, 2.8);
    \coordinate (A) at (-0.3, 0);
    \coordinate (B) at (2, 0);
    \coordinate (C) at (1, 1.732);
    
    \draw[->, thick, >=stealth] (0.6, 2.8)--(-0.5, 2.8) node[midway, below] {$\Phi[\tau] = \tau$};
    \draw[->, thick, >=stealth] (1, 2.8)--(1, 2) node[midway, below right] {$[U,H_S+H_E]=0$};
    \draw[->, thick,>=stealth] (2, 0) -- (0.2, 0) node[midway, below, sloped] {$\supset$};
    \draw[->, thick,>=stealth] (1.2, 1.6) -- (2, 0.2) 
    node[midway, above, sloped] {$[\rho_E,+H_E]=0$}
    node[midway, below, sloped] {$\rho_S\otimes \rho_E$};
    \draw[->, thick,>=stealth] (0.9, 1.6) -- (0.2,0.2) node[midway, above, sloped] {$\rho=\rho_S\otimes\rho_A\otimes\rho_E$}node[midway, below, sloped] {$\exists \hspace{0.2em}    i\neq j: \rho_{ij} \neq 0 \hspace{0.2em}$};
    \node at (O) [above] {$\Phi_{}: \rho \mapsto \rho$};
    \node at (A) [below] {${\Phi^{\hspace{0.2em}\textcolor{cyan}{\bullet}}}_{GP(E)}:\tau\mapsto \tau$};
    \node at (GP) [left] {${\Phi^{\hspace{0.2em}\textcolor{orange}{\bullet}}}_{GP}$};
    \node at (B) [below right] {${\Phi^{\hspace{0.2em}\textcolor{magenta}{\bullet}}}_{PC}$};
    \node at (C) [above] {${\Phi^{  \hspace{0.2em}\textcolor{black}{\bullet}}_{E}}$};
\end{tikzpicture}
\caption{A schematic for the conditions needed to get maps. We start with $\Phi[\circ]$ which just defines the entire set of CPTP and Hermiticity preserving dynamical maps. Applying energy conservation on the composite gives the subset of what we call E-preserving maps $\Phi_E ^{\textcolor{black}{\bullet}}$, whereas, demanding preservation of Gibbs-state gives the general Gibbs-preserving map $\Phi_{GP}^{\textcolor{orange}{\bullet}}$. Demanding conditions on the environment state and initial correlations gives $\Phi_{PC} ^{\textcolor{magenta}{\bullet}}$ and the addition of resource qubit with coherence results in covariant Gibbs-preserving maps $\Phi_{GP(E)}^{\textcolor{cyan}{\bullet}}$. }
\end{figure}

\section{Theoretical Framework}
\label{Sec:theory_strings}
\subsection{Pauli strings and ladders}
We consider $n$-qubit systems, with the Hilbert space given by $\mathcal{H} = (\mathbb{C}^2)^{\otimes n}$. A complete orthonormal basis for operators acting on this space is given by the $4^n$ Pauli strings, formed by tensor products of single-qubit Pauli operators ($I, \sigma_X, \sigma_Y, \sigma_Z$):
$$O_k = P_{1k}\otimes P_{2k}\otimes\ldots \otimes P_{nk}$$
where $P_{jk} \in {I, \sigma_X, \sigma_Y, \sigma_Z}$ for each qubit $j \in {1, \ldots, n}$, and $k \in {1, \ldots, 4^n}$ indexes the $4^n$ distinct Pauli strings. Any operator $M$ in this Hilbert space, including density matrices and unitary operators, can be expressed as a linear combination of these basis operators:
$$M = \sum_k \alpha_k O_k$$
For what follows, we introduce the terminology of the charge of a Pauli string, $O_k$. The charge $C(O_k)$ tracks the parity of coherence-generating operators in a Pauli string and is defined as 
\begin{equation}
    C(O_k) = (-1)^{n_x+n_y}.
\end{equation}
Here, $n_x$ and $n_y$ count the number of $\sigma_x$ and $\sigma_y$ operators respectively. The charge is +1 when $n_x+n_y$ is even and -1 when the sum is odd. We use this to prove in the Appendix \ref{App3:Paulicharge} that U(1) symmetric evolution preserves charge, meaning operations starting from diagonal states (all charge +1 terms) cannot generate local coherence (charge -1 terms).
The Pauli string representation allows us to track how different components of density matrix transform under evolution, particularly the coherence versus population terms.

To analyze the constraints imposed by U(1) symmetry, it is often convenient to employ a basis related to ladder operators for each qubit, instead of the conventional Pauli basis. For a single qubit $j$, these ladder operators are defined as: $\sigma^+_j = |1\rangle_j\langle 0|$ and
$\sigma^-_j = |0\rangle_j\langle 1|$
These operators are directly related to the Pauli operators by $\sigma_X = \frac{1}{2}(\sigma^+ + \sigma^-)$ and $\sigma_Y = \frac{1}{2i}(\sigma^+ - \sigma^-)$. This basis naturally reflects the U(1) symmetry structure, particularly when considering a conserved quantity like the total charge, $\sum_i \sigma_{zi}$. Such Hamiltonians will only contain terms involving products of $\sigma^+_j$ and $\sigma^-_j$ operators where the total number of raising operators equals the total number of lowering operators, thus preserving the overall excitation number.

\subsection{U(1) Symmetry and Energy Conservation}
 For a composite system with total Hamiltonian $H = H_S + H_E + H_{SE}$, U(1) symmetry is established when the  interaction Hamiltonian commutes with the free Hamiltonian. For qubits, we define the U(1) generator to be the particle number operator:
$$G = \sigma_Z\otimes I + I \otimes \sigma_Z$$
The commutation relation $[H, G] = 0$ ensures the conservation of the total excitation/particle number (or magnetization along the z-axis) and fundamentally constrains how operators transform under the dynamics.\\
The unitary operator associated with this dynamics also preserves the symmetry and possesses a block-diagonal structure in the basis of conserved excitation number. This means that $U$ can be decomposed into independent blocks, each acting only on subspaces corresponding to a fixed number of excitations (qubits in the $|1\rangle$ state):
$$U=\mathcal{U}_0 \oplus \mathcal{U}_1 \oplus \mathcal{U}_2 \ldots \oplus \mathcal{U}_N$$
Here, $\mathcal{U}_m$ is an independent unitary matrix acting within the subspace spanned by states with exactly $m$ qubits in the $|1\rangle$ state. The dimension of each $\mathcal{U}_m$ is given by $\binom{N}{m}$, the number of ways to choose $m$ qubits to be in the $|1\rangle$ state out of $N$. This block-diagonal structure implies that U(1)-symmetric unitaries cannot change the total excitation number of a state, restricting the types of transformations they can implement. In the language of Pauli strings, this translates to charge conservation property: operators containing an even number of $\sigma_x$ and $\sigma_y$ terms can only evolve into operators with an even number of such terms, while operators with an odd number evolve only into odd-charge operators.
\section{Qubit Dynamical Maps}
\label{Sec:QubitDynMap}
\subsection{Framework for Open Quantum Evolution}

We consider the evolution of a quantum system A interacting with an environment B, initially in a joint state $\rho_{AB}$. If the total bipartite system undergoes a unitary evolution $U$, the effective transformation of system A, after tracing out environment B, is described by a quantum dynamical map $\mathcal{E}$. This process can be formally expressed as:
\begin{equation}
\rho_A \mapsto \mathcal{E}(\rho_A) = \text{Tr}_B \left[ U (\rho_A \otimes \rho_B) U^\dagger \right]
\end{equation}
Here, $\rho_B$ is the initial state of the environment. Since this operator arises from a joint unitary evolution on uncorrelated states, the evolution $\mathcal{E}$ is guaranteed to be a completely positive trace-preserving (CPTP) map \cite{Jagadish_2018}. A quantum channel $\mathcal{E}$ describes the finite-time open dynamics of a quantum system, mapping an initial quantum state to its final state after interaction with the environment. Any dynamical map $\mathcal{E}$ on the states of a quantum system must be linear and preserve Hermiticity, positivity, and trace—properties that collectively ensure valid density matrix transformations.

\subsection{Bloch Vector Representation and Affine Form}

For a single qubit, the dynamical map acts on the space of $2\times2$ complex Hermitian matrices. The Pauli matrices along with the $2\times2$ identity matrix $(I, \sigma_X, \sigma_Y, \sigma_Z)$ form a natural and complete basis for this space. A general qubit state (density matrix) $\rho$ can be represented in terms of its Bloch vector $\vec{a} = (a_1,a_2,a_3)^T$ as:
\begin{equation}
\rho = \frac{1}{2} (I + \vec{a} \cdot \vec{\sigma}) = \frac{1}{2} \begin{pmatrix} 1+a_3 &a_1 - ia_2 \\a_1 + ia_2 & 1-a_3 \end{pmatrix}
\end{equation}
where $\vec{\sigma} = (\sigma_X, \sigma_Y, \sigma_Z)$ are the Pauli matrices. The components of the Bloch vector are constrained by $a_1^2 +a_2^2 +a_3^2 \leq 1$. This allows for a geometric visualization of any qubit state as a point within the three-dimensional unit Bloch sphere: pure states reside on the surface, while mixed states lie within its interior.

Under a quantum dynamical map $\mathcal{E}$, the initial Bloch vector $\vec{a}$ transforms to a new Bloch vector $\vec{a}'$. This transformation is linear and affine, and can be generally expressed as:
\begin{equation}
\vec{a}' = \vec{\tau} + T\vec{a}
\end{equation}
Here, $\vec{\tau}$ is the three-dimensional shift vector representing a translation or shift of the Bloch vector, and $T$ is a $3\times3$ real matrix that scales, rotates, and mixes the components of the initial Bloch vector.\\
The shift-vector captures the non-unitality and corresponds to the evolution of Pauli strings with support in the environment to terms that have support on the system qubit. Consider the system-environment decomposed as a seperable and a correlated matrix such that $\rho_{SE} = \rho_S \otimes \rho_E + \chi_{SE}$. It's unitary evolution under $U$ is:
\begin{align*}
    U\rho_{SE}U^\dagger &= U\rho_S\otimes \rho_EU^\dagger + U\chi_{SE}U^\dagger\\
&=U(\mathbf{1}+\vec{a}.\vec{\sigma})\otimes\rho_E^\dagger+ U\chi_{SE}U^\dagger\\
&=U(\mathbf{1}\otimes\rho_E + \chi_{SE})U^{\dagger}+U(\vec{a}.\vec{\sigma}\otimes \rho_E) U^\dagger
\end{align*}
The shift vector $\vec{\tau}$ captures how the environment and initial correlations contribute to the system evolution, while the transformation matrix $T$ describes how the initial system Bloch vector components are mixed and scaled. 

\subsection{Matrix Representation}

The affine transformation can be compactly represented by an augmented $4\times4$ matrix, referred to as the affine form of the dynamical map \cite{Jagadish_2018,ruskai2001analysiscompletelypositivetracepreservingmaps,Filippov2020}. This map $\Phi$ acts on an augmented vector $(1, \vec{a})^T$:
\begin{equation}
\begin{pmatrix} 1 \\ \vec{a}' \end{pmatrix} = \begin{pmatrix} 1 & \mathbf{0}^T \\ \vec{\tau} & T \end{pmatrix} \begin{pmatrix} 1 \\ \vec{a} \end{pmatrix}
\end{equation}
where $\mathbf{0}^T$ is a $1\times3$ zero vector. This compact form offers several utilities for characterizing qubit dynamics:

\begin{itemize}
\item \textbf{Trace preservation}: The first row of $\Phi$ inherently ensures that the map is trace-preserving, since the trace of $\rho$ is related to the first component of the augmented Bloch vector, which remains invariant.

\item \textbf{Unitality characterization}: The shift vector $\vec{\tau}$ directly determines whether the map is unital. A map is unital if it leaves the maximally mixed state ($\rho = I/2$, corresponding to $\vec{a} = \mathbf{0}$) invariant, which occurs if and only if $\vec{\tau} = \mathbf{0}$.

\item \textbf{Unitary constraints}: For a unitary process, $T$ is a rotation matrix satisfying $TT^T = I$. For general open quantum evolution, $T$ is a more general matrix reflecting dissipation and decoherence.

\item \textbf{CPTP conditions}: The constraints for a map to be CPTP translate into specific conditions on the matrix $T$ and vector $\vec{\tau}$.
\end{itemize}

In this work, we derive these affine maps for qubit subsystems by considering unitary evolution of larger composite systems followed by tracing out environmental degrees of freedom. This approach provides a non-perturbative way to describe the effective open-system dynamics, allowing us to connect the properties of the overall unitary interaction to the characteristics of the resulting qubit channel.

\section{Hierarchy of Thermodynamic Operations}
\label{Sec:Hierarchy}
The constraints imposed by U(1) symmetry create a natural hierarchy in the space of thermodynamic operations. We characterize this by analyzing how different initial conditions and resource allocations lead to distinct classes of dynamics. 
\subsection{Globally-covariant Dynamics}
 We consider qubit systems where the total, corresponding to the total particle number $\sum_i \sigma_{zi}$, is conserved under unitary evolutions. For a bipartite system AB, the effective transformation of system A, after tracing out environment B, is described by a quantum dynamical map $\Phi_E$. This process can be formally expressed as:
\begin{equation}
\rho_A \mapsto \Phi_E[\rho_A] = \text{Tr}_B \left[ U (\rho_{AB}) U^\dagger \right]
\end{equation}
where $U$ has U(1) symmetry. In the affine form, such dynamics results in a dynamical map with all the entries in rotaiton matrix and shift vector as non-zero. We call these maps globally energy preserving maps, $\Phi_E[\cdot]$. In Section \ref{Sec:Results}, we will compare the action of this map on the qubits.
\subsection{Phase-Covariant Qubit Dynamics}

Phase-covariant maps $\Phi_{\text{PC}}$ are defined by their commutation with the action of a unitary $U(\alpha) = e^{-i \sigma_Z\alpha}$, such that
\begin{equation}
\Phi_{\text{PC}}^{\textcolor{magenta}{\bullet}} \cdot (U(\alpha)\rho_S U(\alpha)^\dagger) = U(\alpha) \Phi_{\text{PC}}^{\textcolor{magenta}{\bullet}}(\rho_S) U(\alpha)^\dagger
\end{equation}
They arise in systems with time translational symmetry and are therefore important for simplifying the analysis of open quantum systems under the influence of noise which preserves certain symmetries \cite{PhysRevX.5.021001}. At the level of qubits where, these maps naturally arise as phase-covariant maps and have been extensively studied for noise and homogenization models \cite{Filippov2020,ROGA2010311,Horodecki_2013,Faist_2015,ziman2001quantumhomogenization}.
In particular, it has been shown that for an initially separable system $\rho = \rho_S \otimes \rho_E$, phase-covariant maps arise when
\begin{align}
[H_I, H_S + H_E] &= 0 \\
[H_E, \rho_E] &= 0
\end{align}
which corresponds to situations in which the state of the environment is thermal and the interaction conserves energy \cite{PhysRevResearch.4.043075}.

The action of a phase-covariant map on the qubit Bloch vector is be given by
\begin{equation}
[a_1,a_2,a_3] \mapsto [\lambda a_1, \lambda a_2, \lambda_za_3 + \tau_z]
\end{equation}
where $\lambda$, $\lambda_z$, and $\tau_z$ are parameters of the map. The action of such a map on the Bloch sphere contracts the sphere into an ellipsoid shifted along the z-axis \cite{Filippov2020}.\footnote{The map is completely positive only when $|\lambda_z| + |\tau_z| \leq 1$ and $4\lambda^2 + \tau_z^2 \leq [1 + \lambda_z]^2$.}

Therefore, the fixed state of the phase-covariant map is given by
\begin{equation}
[0, 0, a_{\text{PC}}^*] = \left[0, 0, \frac{\tau_z}{1 - \lambda_z}\right]
\end{equation}
such that $\Phi_{\text{PC}}^{\textcolor{magenta}{\bullet}}(\rho^*)\mapsto \rho^*$ where $\rho^* = \frac{1}{2}(\mathbf{1} + a_{\text{PC}}^* \sigma_Z)$.

When the energy basis of the qubit aligns with the Z-phase basis, phase-covariant maps and thermal operations overlap. Thermal operations $\Phi_{\text{th}}$ are defined by
\begin{align}
\Phi_{\text{th}}(\rho) &= \Phi_{\text{PC}}^{\textcolor{magenta}{\bullet}}(\rho)=\text{Tr}_B[U(\rho \otimes \tau_B)U^\dagger] \\
\text{where } \tau_B &= \frac{e^{-\beta H_B}}{Z_B}, \quad [U, H_S + H_B] = 0
\end{align}
The above conditions—uncorrelated initial state, Gibbs state for the environment, and energy-preserving interaction—exactly align with the conditions for a phase-covariant map to occur. In this work, we consider the qubit energy basis to be the same as the z-basis. Consequently, thermal operations are always phase-covariant (also called time-covariant owing to time translational symmetry): $\Phi_{\text{th}}(e^{-iH_S t}\rho e^{iH_S t}) = e^{-iH_S t}\Phi_{\text{th}}(\rho)e^{iH_S t}$. By construction, they preserve the Gibbs state: $\Phi_{\text{th}}(\tau_S) = \tau_S$ where $\tau_S = e^{-\beta H_S}/Z_S$, and therefore are a subset of Gibbs-preserving maps. They fall in the class of dynamics that can never create coherence between energy eigenstates. As we will see, this is not true for a general $\Phi_{\text{GP}}$.

\subsection{Gibbs-Preserving Qubit Dynamics}

For a qubit Hamiltonian $H = \frac{\omega}{2}\sigma_z$, the Gibbs state at inverse temperature $\beta = 1/T$ is \footnote{We use units where $\hbar = k_B = 1$.}
\begin{equation}
\rho_G = \frac{e^{-\beta H}}{Z} = \frac{1}{2}(I + r_G \sigma_z)
\end{equation}
where $r_G$ is related to the thermal population in the excited state. A completely positive trace-preserving map $\Phi$ is Gibbs-preserving if
\begin{equation}
\Phi_{\text{GP}}^{\textcolor{cyan}{\bullet}}(\rho_G) = \rho_G
\end{equation}

As we saw above, Gibbs-preserving maps can be time/phase-covariant. In this case, the fixed state $a_{\text{PC}}^* = r_G = \frac{\tau_z}{1 - \lambda_z}$. Operationally, the map parameters can be chosen such that the combination gives $r_G$. However, since the only constraint on Gibbs-preserving maps is that they preserve the Gibbs state, we can have more general dynamical maps as well. The most general Gibbs-preserving map can be derived by assuming a fully parameterized map $\Phi$ that obeys the above constraint.

Solving this constraint for the affine form of the dynamical map gives
\begin{equation}
\Phi_{\text{GP}}^{\textcolor{cyan}{\bullet}} = \begin{pmatrix}
1 & 0 & 0 & 0 \\
-T_{13}r_G & T_{11} & T_{12} & T_{13} \\
-T_{23}r_G & T_{21} & T_{22} & T_{23} \\
(1-T_{33})r_G & T_{31} & T_{32} & T_{33}
\end{pmatrix}
\end{equation}
Under this map, the vector $[1, 0, 0, r_G]$ gets mapped to itself. Therefore, a general Gibbs-preserving map need not be phase-covariant and can generate coherences and mix populations and coherences under evolution.
\section{Minimal qubit system}
\label{Sec:min_model}
To investigate the role of quantum resources in thermodynamic operations, we present simple models of two- and three-qubits for minimal Gibbs-preserving constructions \cite{Fauseweh2024,roadmap}.
\subsection{Two-Qubit System-Environment Model}
We first consider a system qubit S interacting with an environment qubit E. A general bipartite state can be decomposed into separable and non-separable parts:
\begin{equation}
\rho_{SE} = \rho_S \otimes \rho_E + \chi_{SE}
\end{equation}
For a two-qubit system, this can be written in the Pauli basis as:
\begin{equation}
\rho_{SE} = \frac{1}{4}\left(\sum_{i=0}^3(a_i \sigma_i\otimes\mathbf{1} + b_i \mathbf{1}\otimes\sigma_i)+\sum_{i,j=1}^3 c_{ij}\sigma_i\otimes\sigma_j\right)
\end{equation}
where $\sigma_0 = \mathbf{1}$, $\sigma_1 = \sigma_X$, $\sigma_2 = \sigma_Y$, $\sigma_3 = \sigma_Z$, and $\vec{a} = (a_1, a_2, a_3)$ and $\vec{b} = (b_1, b_2, b_3)$ are the Bloch vectors for the system and environment qubits respectively. The correlation matrix elements $c_{ij}$ characterize the quantum correlations between system and environment.

The two-qubit Hamiltonian is:
\begin{align*}
H &= H_S + H_E + H_{SE} \\
&= \frac{h_1}{2}\sigma_Z \otimes \mathbf{1} + \frac{h_2}{2}\mathbf{1}\otimes\sigma_Z + \frac{J}{2}(\sigma_X\otimes\sigma_X + \sigma_Y\otimes\sigma_Y)
\end{align*}
The condition for energy conservation $[H_{SE}, H_S + H_E] = 0$ is achieved when $h_1 = h_2$, ensuring that the interaction Hamiltonian commutes with the free Hamiltonian and preserves the U(1) symmetry generated by $G = \sigma_Z \otimes \mathbf{1} + \mathbf{1} \otimes \sigma_Z$.
\subsection{Three-Qubit Minimal Construction}
We also consider a three-qubit system with system (S), environment (E), and resource (R) qubits. The total Hamiltonian is:
\begin{equation}
H_{\text{tot}} = J\sum_{ij=SE,SR} (\sigma_{X,i}\sigma_{X,j} + \sigma_{Y,i}\sigma_{Y,j}) + h\sum_{i=S,E,R}\sigma_{Z,i}
\end{equation}
This Hamiltonian maintains energy conservation while allowing the system to interact with both environment and resource qubits through XX interactions. The Bloch vector for the system is $\vec{a} = (a_1, a_2,a_3)$, for the environment is $\vec{b} = (b_1, b_2,b_3)$, and for the resource qubit is $\vec{f} = (f_1, f_2,f_3)$

\begin{figure}[htbp]
\centering
\begin{tikzpicture}[scale=0.8]
    \definecolor{pastelblue}{RGB}{174, 198, 207}
    \definecolor{pastelgreen}{RGB}{209, 233, 192}
    \definecolor{pastelpink}{RGB}{255, 185, 170}
    
    \draw[fill=pastelgreen, draw=black, thick] (-3,0) circle (1) node {$\rho_R$};
    \node[below] at (-3,-1.3) {Resource};
    
    \draw[fill=pastelblue, draw=black, thick] (0,0) circle (1) node {$\rho_S$};
    \node[below] at (0,-1.3) {System};
    
    \draw[fill=pastelpink, draw=black, thick] (3,0) circle (1) node {$\rho_E$};
    \node[below] at (3,-1.3) {Environment};
    
    \draw[thick, <->] (-2,0) -- (-1,0);
    \node[above] at (-1.5,0.2) {$H_{SR}$};
    
    \draw[thick, <->] (1,0) -- (2,0);
    \node[above] at (1.5,0.2) {$H_{SE}$};
\end{tikzpicture}
\caption{Schematic of the three-qubit system for minimal Gibbs-preserving construction. The system qubit S interacts with both environment E and resource R through XY-type interactions. The S is the target qubit whose dynamics we will study, E is the environment qubit at some reference thermal temperature given by the z-Bloch vector as $b_3$ and R is the resource qubit that supplies coherence ($f_1, f_2 \neq 0$) needed for symmetry breaking}
\label{fig:three_qubit_schematic}
\end{figure}
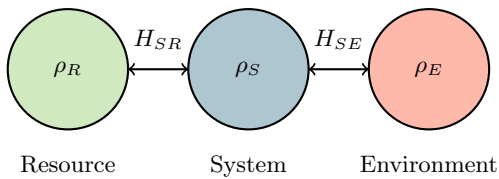
We show that a two-qubit system with local coherence in one qubit is sufficient to break phase covariance but cannot generate Gibbs-preserving maps unless global coherences are introduced. The three-qubit system is a minimal system where the addition of the resource qubit with local coherence provides the extra degree of freedom needed to satisfy the precise constraint relationships to get Gibbs-preserving dynamics while maintaining energy conservation.

\section{Results}
\label{Sec:Results}
\subsection{Fundamental Constraints: Coherence Requirements for Non-Unital Maps}
The U(1)-symmetry has direct implications for the Bloch vector map, particularly for the shift vector $\vec{\tau}$. Our first main result establishes the fundamental constraints imposed by this symmetry:
\begin{theorem}
\label{thm:no_coherence}
U(1)-symmetric unitary dynamics cannot generate local quantum coherence from initial states that are diagonal in the computational basis.
\end{theorem}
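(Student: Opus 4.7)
The plan is to prove the theorem via a conservation-law argument using the Pauli string charge $C(O_k) = (-1)^{n_x+n_y}$ introduced earlier. The strategy has three movements: (i) characterize the Pauli content of any initial computational-basis-diagonal state, (ii) invoke the U(1) charge conservation of the dynamics (established in Appendix \ref{App3:Paulicharge}) to constrain the Pauli content of the evolved global state, and (iii) show that the partial trace maps this constraint into a statement about the reduced state of any local subsystem.

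First I would observe that any density matrix $\rho(0)$ that is diagonal in the computational basis can only be expanded in Pauli strings built from $\{I, \sigma_Z\}$, since any single $\sigma_X$ or $\sigma_Y$ factor produces off-diagonal matrix elements. Consequently every string in the expansion has $n_x = n_y = 0$, and so $C(O_k) = +1$. Schematically, the initial state lives in the charge-$+1$ sector of the operator algebra,
\begin{equation}
\rho(0) = \sum_{k\,:\,C(O_k)=+1} \alpha_k\, O_k.
\end{equation}

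Next I would apply the U(1)-symmetric unitary. Using the block-diagonal decomposition $U = \bigoplus_m \mathcal{U}_m$ in the conserved excitation-number basis, or equivalently the Appendix result that $[U,G]=0$ implies $U O_k U^\dagger$ remains a linear combination of equal-charge Pauli strings, the evolved global state is
\begin{equation}
\rho(t) = \sum_{k\,:\,C(O_k)=+1} \alpha_k\, U O_k U^\dagger,
\end{equation}
which still lies entirely in the charge-$+1$ sector. The final step is to partial-trace over the environment to obtain the reduced system state. Because $\text{Tr}(P) = 0$ for any nontrivial Pauli operator, only strings of the form $P_S \otimes I_E^{\otimes(n-1)}$ survive, and the surviving system factor $P_S$ must itself inherit charge $+1$. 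For a single qubit this forces $P_S \in \{I,\sigma_Z\}$, so the reduced Bloch vector satisfies $a_1 = a_2 = 0$, i.e.\ no local coherence is generated.

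I expect the conceptually subtle point — and the one requiring the greatest care — to be step (iii), in arguing that charge structure is preserved under partial trace in a way that translates a global operator-algebraic statement into a sharp local one. In particular, I would want to make explicit that the charge of a Pauli string is \emph{multiplicative} over tensor factors, so that demanding the environment factor be $I$ (to survive the trace) pins down the parity of $\sigma_X, \sigma_Y$ on the system alone. The U(1)-preservation step is essentially a direct invocation of the Appendix lemma, and the initial-state step is immediate; together, these three observations close the argument without requiring any calculation of specific map entries.
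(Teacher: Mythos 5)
Your proposal is correct and follows essentially the same route as the paper: expand the diagonal initial state in charge-$+1$ Pauli strings, invoke the charge-conservation lemma for U(1)-symmetric unitaries (Appendices \ref{App1:Unitarystrings} and \ref{App3:Paulicharge}), and conclude that a local-coherence term such as $\sigma_X\otimes I\otimes\cdots\otimes I$ is forbidden because it carries charge $-1$. Your step (iii), making the partial trace and the multiplicativity of the charge explicit, is a slightly more careful rendering of what the paper's Appendix \ref{App2:Nogo} and its corollary assert, but it is the same argument.
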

See Appendices \ref{App1:Unitarystrings} and \ref{App2:Nogo} for the complete proof based on charge conservation of Pauli strings under U(1) evolution. Using the Pauli operator decomposition, we can show that U(1) operators for N-qubits can be decomposed into Pauli strings such that each string has an equal number of $\sigma^+$ and $\sigma^-$ operators. Equivalently, we can say that out of the $4^N$ terms, a general U(1) operator only requires terms where the number of $\sigma_x$ plus number of $\sigma_y$ is even (See Appendix \ref{App1:Unitarystrings}), and these U(1) operations do not change the charge of a Pauli string. For a system that only has diagonal states, the charge of each Pauli string in the decomposition of the total density matrix is +1. To generate local coherence, the evolved density matrix must have terms proportional to $O_x = \sigma_x\otimes I\otimes I\ldots \otimes I$ with charge $C(O_x)=-1$. But charge is conserved under U(1) and therefore such a transformation forbidden starting from Pauli strings with C=+1 (See Appendix \ref{App3:Paulicharge}). 
\begin{corollary}
\label{cor:shift_constraint}
For qubit systems evolving under U(1)-symmetric dynamics with diagonal environment states $\rho_E$, the resulting dynamical map must satisfy $\tau_x = \tau_y = 0$ in the affine representation $\vec{a}' = \vec{\tau} + T\vec{a}$.
\end{corollary}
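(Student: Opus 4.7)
The plan is to reduce the corollary to Theorem \ref{thm:no_coherence} by identifying $\vec{\tau}$ with the Bloch vector of the image of the maximally mixed state, and then observing that this image must be diagonal under the stated hypotheses. First I would set $\vec{a} = 0$ in the affine representation $\vec{a}' = \vec{\tau} + T\vec{a}$, which immediately gives $\vec{\tau}$ as the Bloch vector of $\mathcal{E}(I/2)$. Since the maximally mixed state $I/2$ is diagonal in the computational basis and, by hypothesis, $\rho_E$ is diagonal, the initial joint state $(I/2)\otimes\rho_E$ is also diagonal; equivalently, in the Pauli-string expansion of this joint state, only strings built from $\{I,\sigma_Z\}$ factors appear, so every nonzero Pauli coefficient carries charge $C(O_k) = +1$.

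Next I would invoke Theorem \ref{thm:no_coherence} (established through charge conservation of Pauli strings under U(1)-symmetric unitaries): the evolved joint state $U\bigl((I/2)\otimes\rho_E\bigr)U^\dagger$ continues to admit a Pauli-string expansion supported only on charge $+1$ strings. I would then take the partial trace over the environment. Under partial trace, a Pauli string $P_S \otimes P_E$ survives only if $P_E \propto I$, so only terms of the form $P_S \otimes I$ remain, and each such $P_S$ inherits the charge of the full string, namely $+1$. Hence the reduced state $\mathcal{E}(I/2)$ expands only in $\{I,\sigma_Z\}$, i.e.\ it is diagonal in the computational basis.

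Finally I would translate this back to the Bloch representation: a single-qubit state is diagonal in the $\sigma_Z$ eigenbasis if and only if its Bloch vector has vanishing $x$ and $y$ components. Reading off the Bloch vector of $\mathcal{E}(I/2) = \tfrac{1}{2}(I + \vec{\tau}\cdot\vec{\sigma})$ then yields $\tau_x = \tau_y = 0$, with $\tau_z$ unconstrained.

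I do not expect a substantive obstacle here, since the corollary is essentially a restatement of Theorem \ref{thm:no_coherence} for the special input $\rho_S = I/2$; the only subtle point worth checking carefully is that the partial trace respects the charge grading in the sense used above, i.e.\ that tracing out $E$ cannot convert a charge-$+1$ joint string into a charge-$-1$ system string. This follows because $\mathrm{Tr}_E[P_S\otimes P_E] = P_S \cdot \mathrm{Tr}[P_E]$ vanishes unless $P_E = I$, and $I$ has charge $+1$, so the system factor $P_S$ must already have charge $+1$.
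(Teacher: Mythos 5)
Your proposal is correct and follows essentially the same route as the paper: both arguments rest on charge conservation of Pauli strings under U(1)-symmetric unitaries, applied to the fact that a diagonal joint input expands only in charge-$+1$ strings, which cannot evolve into the odd-charge strings $\sigma_X\otimes I\otimes\cdots$ or $\sigma_Y\otimes I\otimes\cdots$ that would be needed for $\tau_x,\tau_y\neq 0$. Your device of evaluating the map at $\rho_S=I/2$ is just a repackaging of the paper's identification of $\vec{\tau}$ with the evolution of the $I\otimes\rho_E$ component, and your check that the partial trace respects the charge grading is the same observation the paper uses implicitly.
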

This result implies, furthermore, that just the presence of coherence alone is not enough. The coherence terms in the Pauli decomposition of the density matrix must have $C(\chi_{SE} + I\otimes \rho_E) = -1$. Only then can the unitary operations generate dynamics for the subsystem such that it has local coherences. 
For example, a system environment coherence of the form $\sigma_x\otimes\sigma_y\otimes I \ldots \otimes I$ cannot evolve to a term like $\sigma_x\otimes I \otimes I \ldots \otimes I$ under U(1) symmetric operations.\\

 However, if say at least one of the qubits in the environment $\rho_E$ possesses pre-existing coherences such that its Bloch vector is $\vec{b} = (b_1, b_2, b_3)$ corresponding to local coherence in the qubit, then a general U(1) operation can leverage the resource and generate coherences in the system qubit state such that $\tau_{x,y} \neq 0$. We show explicit calculation of this for a two qubit system evolving with a general U(1) unitary in Appendix \ref{App4:dynmapunderU}. 
This brings us to the second result:
\begin{theorem}
\label{thm:coherence_seed}
Non-zero shift components $\tau_x, \tau_y \neq 0$ in qubit dynamical maps under U(1)-symmetric evolution require pre-existing coherences in the environment or coherences between the system and the environment that couple states with unequal energy states.
\end{theorem}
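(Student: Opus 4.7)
The plan is to derive the theorem as a structural consequence of Theorem \ref{thm:no_coherence} by identifying $\tau_x$ and $\tau_y$ with the coefficients of specific charge-$(-1)$ output Pauli strings and then tracing those coefficients back to mandatory charge-$(-1)$ content in the initial joint state. First I would isolate the shift components by setting the system Bloch vector to zero in the affine form $\vec{a}' = \vec{\tau} + T\vec{a}$: using Pauli orthogonality under the trace inner product,
\begin{equation}
\tau_\alpha = \mathrm{Tr}\!\left[(\sigma_\alpha \otimes I^{\otimes n_E})\, U\,\bigl(I_S/2 \otimes \rho_E + \chi_{SE}\bigr)\, U^\dagger\right],
\end{equation}
for $\alpha\in\{x,y\}$, so that $\vec\tau$ is determined entirely by the environment marginal and the system-environment correlations.

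Second, I would expand $I_S/2 \otimes \rho_E + \chi_{SE}$ in the full Pauli string basis and invoke the charge conservation established in Appendix \ref{App3:Paulicharge}. The two output strings $\sigma_x \otimes I^{\otimes n_E}$ and $\sigma_y \otimes I^{\otimes n_E}$ both carry charge $C=-1$, so charge preservation under $U$ forces the trace above to vanish whenever every input Pauli string in the decomposition has charge $+1$. Consequently, $\tau_x$ or $\tau_y$ can be nonzero only if the initial joint state contains at least one Pauli string of charge $-1$.

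Third, I would classify the possible charge-$(-1)$ input strings into two mutually exclusive families. Family (i) consists of strings of the form $I_S \otimes P_E$ where $P_E$ has an odd total count of $\sigma_x$ and $\sigma_y$ on the environment; these appear in the decomposition of $\rho_E$ exactly when the environment state itself carries coherence, i.e., pre-existing local environment coherence. Family (ii) consists of terms $P_S \otimes P_E \subset \chi_{SE}$ with combined odd count of $\sigma_x,\sigma_y$; using $\sigma^\pm = (\sigma_x \pm i\sigma_y)/2$, such strings are linear combinations of ladder-operator products with a net nonzero change in total excitation $G = \sum_i \sigma_{Z,i}$, and hence are precisely the coherences connecting eigenstates of $H_S + H_E$ with distinct eigenvalues—the ``unequal-energy'' correlations in the theorem statement. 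Taking the contrapositive, absence of both families leaves the input with charge $+1$ only, forcing $\tau_x=\tau_y=0$, which proves the theorem.

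The main obstacle, if there is one, is not analytic but definitional: pinning down that family (ii) really is the set of $\chi_{SE}$ terms that ``couple states with unequal energy,'' as opposed to system-environment correlations that merely permute or dephase same-energy sectors. This is handled by the ladder-operator identification above, so the argument reduces to the charge-conservation bookkeeping already used in Theorem \ref{thm:no_coherence}; the only new ingredient is the refinement that the needed resource lives either purely in $\rho_E$ or in the off-block-diagonal sectors of $\chi_{SE}$ with respect to $G$.
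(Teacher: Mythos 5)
Your proposal is correct and follows essentially the same route as the paper: charge conservation of Pauli strings under U(1)-symmetric unitaries (Appendix \ref{App3:Paulicharge} and its corollary), the observation that the target strings $\sigma_{x,y}\otimes I^{\otimes n_E}$ carry charge $-1$, and hence the necessity of a charge-$(-1)$ string in $I_S/2\otimes\rho_E + \chi_{SE}$, split into environment coherence versus odd-charge correlations (the end of Appendix \ref{App2:Nogo}). Your explicit trace formula for $\tau_\alpha$ and the ladder-operator identification of odd-charge $\chi_{SE}$ terms with unequal-energy coherences is a slightly tighter packaging of the same bookkeeping, but not a different argument.
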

See Appendices \ref{App2:Nogo} and \ref{App3:Paulicharge} for the proof showing how environmental coherence provides the odd-parity Pauli string components necessary to break the charge-conservation constraint.
Beyond the environment's initial state, any initial system-environment correlations, represented by $\chi_{SE}$, also contribute to the elements of the dynamical map. The term $\text{Tr}_E(U\chi_{SE}U^\dagger)$ can influence both the shift vector $\vec{\tau}$ and the transformation matrix $T$, potentially leading to more complex map structures even if $\rho_E$ is diagonal. We will now adopt these to describe the resources needed to go beyond phase-covariance.

\subsection{Two-Qubit Analysis: Breaking Phase-Covariance with Environmental Coherence}

For the two-qubit system with Hamiltonian $H = \frac{h_1}{2}\sigma_Z \otimes \mathbf{1} + \frac{h_2}{2}\mathbf{1}\otimes\sigma_Z + \frac{J}{2}(\sigma_X\otimes\sigma_X + \sigma_Y\otimes\sigma_Y)$ and energy conservation corresponding to the U(1) symmetry condition $h_1 = h_2$, we analyze how environmental coherence affects the resulting dynamical maps in Appendix \ref{App5:fulldynmap}. 

When the environment is diagonal ($b_1 = b_2 = 0$) and uncorrelated ($c_{ij} = 0$), Theorem~\ref{thm:no_coherence} constrains the dynamics to phase-covariant maps. The resulting map has the form:
\begin{equation}
\Phi_{PC} ^{\textcolor{magenta}{\bullet}}= \begin{pmatrix}
1 & 0 & 0 & 0 \\
0 & \frac{1}{2}c_\theta c_{h_+} & -\frac{1}{2}c_\theta s_{h_+} & 0 \\
0 & \frac{1}{2}c_\theta s_{h_+} & \frac{1}{2}c_\theta c_{h_+} & 0 \\
\frac{1}{2}b_3 s_\theta^2 & 0 & 0 & \frac{c_\theta^2}{2}
\end{pmatrix}
\end{equation}
where $\theta = J$, $h_+ = 2h$, $s_\theta = \sin(\theta),c_\theta = \cos(\theta),s_{h_+} = \sin(h_+),c_{h_+} = \cos(h_+),$and $s_{2\theta} = \sin(2\theta)$. The form of the map confirms $\tau_x = \tau_y = 0$ and restricting dynamics to isotropic scaling of coherence components. Even if the initial state had correlations of the form $\sigma_x\otimes\sigma_x+\sigma_y\otimes\sigma_y$ or $\sigma_x\otimes\sigma_y-\sigma_y\otimes\sigma_x$, they do not break to phase-covariance in the dynamical map since these strings, despite being coherences, have C = +1\ref{App5:fulldynmap}.

When the environment possesses coherence ($b_1, b_2 \neq 0$ in $\rho_E = \frac{1}{2}(I + b_1\sigma_X + b_2\sigma_Y + b_3\sigma_Z)$), the dynamics can lose the phase-covariant structure. The resulting map becomes:
\begin{equation}
\bar{\bar{\Phi}}^{\textcolor{black}{ \bullet}} = \begin{pmatrix}
1 & 0 & 0 & 0 \\
0 & \frac{1}{2}c_\theta c_{h_+} & -\frac{1}{2}c_\theta s_{h_+} & \frac{1}{2}s_\theta(b_1 s_{h_+} + b_2 c_{h_+}) \\
0 & \frac{1}{2}c_\theta s_{h_+} & \frac{1}{2}c_\theta c_{h_+} & \frac{1}{2}s_\theta(b_2 s_{h_+} - b_1 c_{h_+}) \\
\frac{1}{2}s_\theta^2 b_3 & -\frac{1}{4}b_2 s_{2\theta} & \frac{1}{4}b_1 s_{2\theta} & \frac{c_\theta^2}{2}
\end{pmatrix}
\end{equation}
However, while environmental coherence breaks phase-covariance, the presence of local-coherences in a system evolving with XX Hamiltonian, alone cannot achieve classes of maps like Gibbs-preserving dynamics with $h_1 = h_2$. This limitation is specific to XX interaction Hamiltonian and the resulting unitary structure. For this class of interactions, local coherence in the environment qubits is insufficient; global coherence between two qubits is needed for Gibbs-preservation. However, in Appendix \ref{App4:dynmapunderU}, we show that a general U(1)-symmetric unitary with local coherence in one qubit present can achieve dynamics beyond phase-covariance like Gibbs-preservation, demonstrating that the principle holds while highlighting the need for specific parameter tuning or additional resources for full Gibbs-preservation.

\subsection{Three-Qubit Analysis: Achieving Gibbs-preservation}
The three qubit system with the XX Hamiltonian provide the minimal system for Gibbs-preserving dynamics with only local coherence in the resource qubit. Under no constraints on the Bloch vector parameters of the resource qubit we get the following dynamical map. 
\begin{equation}
\Phi_{GP}^{\textcolor{cyan}{ \bullet}} =\begin{pmatrix}
1 & 0 & 0 & 0 \\
B_1\Omega_1 & A\phi_+ & A\phi_- & B_2\Omega_1 \\
B_1\Omega_2 & A\phi_- & A\phi_+ & B_2\Omega_2 \\
\frac{(b_3+f_3)s_{2J}^2}{2} & -\frac{f_2 s_{4J}}{2\sqrt{2}} & \frac{f_1 s_{4J}}{2\sqrt{2}} & c_{2J}^2
\end{pmatrix}
\end{equation}
Here, we have redefined some variables for cleaner representation. The compact notations represent $J' = \sqrt{2} J$, $s_J = \sin(\sqrt{2}J)$, $c_J = \cos(\sqrt{2}J)$, $s_{2J} = \sin(2\sqrt{2}J)$, $c_{2J} = \cos(2\sqrt{2}J)$, $s_{4J} = \sin(4\sqrt{2}J)$, $c_{4J} = \cos(4\sqrt{2}J)$, $s_{2h} = \sin(2h)$, and $c_{2h} = \cos(2h)$. We further define $A = 4(b_3 f_3 + 1)c_{2J} - (b_3 f_3 - 1)(c_{4J} + 3)$, $\Omega_1 = f_1 s_{2h} + f_2 c_{2h}$, $\Omega_2 = f_2 s_{2h} - f_1 c_{2h}$, $B_1 = \sqrt{2}b_3 s_J^3 c_J$, $B_2 = \sqrt{2}s_J c_J^3$, $\phi_+ = \frac{1}{8}\cos{2 h}$, and $\phi_- =\frac{1}{8}\sin{2 h}$.
\begin{theorem}[Minimal Gibbs-Preserving Construction]
\label{thm:minimal_construction}
Gibbs-preserving dynamics $\Phi_{GP}(\rho_G) = \rho_G$ can be achieved using exactly three qubits with the constraint relationships:
\begin{align}
b_3 s_J^2 &= -r_G c_J^2 \label{eq:constraint1}\\
f_3 &= 2r_G - b_3 \label{eq:constraint2}\\
\implies J &= \frac{1}{\sqrt{2}}\arctan\left[\sqrt{\frac{-r_G}{b_3}}\right] \label{eq:constraint3}
\end{align}
where $b_3$ characterizes the environment state, $f_3$ the resource state, and $J$ the interaction strength and $r_G$ is the z-component of the Bloch vector characterizing the Gibbs state $\rho_G$.
\end{theorem}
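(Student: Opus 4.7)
The statement is a fixed-point claim that can be read off the affine map directly. Since $\rho_G = \tfrac{1}{2}(I + r_G\sigma_z)$ has Bloch vector $(0,0,r_G)$, Gibbs preservation is equivalent to
\begin{equation*}
\Phi_{GP}^{\textcolor{cyan}{\bullet}}\begin{pmatrix} 1 \\ 0 \\ 0 \\ r_G \end{pmatrix} = \begin{pmatrix} 1 \\ 0 \\ 0 \\ r_G \end{pmatrix}.
\end{equation*}
The first row holds automatically (trace preservation), so only three component equations remain. I would tackle these in turn, reading the nontrivial entries of the displayed matrix for $\Phi_{GP}^{\textcolor{cyan}{\bullet}}$.

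The $x$ and $y$ rows of the image both collapse to expressions of the form $\Omega_i(B_1 + B_2 r_G) = 0$ for $i=1,2$. Demanding $\Omega_1 = \Omega_2 = 0$ would force $f_1 = f_2 = 0$, stripping the resource of the very coherence that motivates the three-qubit construction; the only productive option is $B_1 + B_2 r_G = 0$. Substituting $B_1 = \sqrt{2}b_3 s_J^3 c_J$ and $B_2 = \sqrt{2} s_J c_J^3$ and factoring out the common $\sqrt{2} s_J c_J$ (nonzero for generic $J$) yields precisely constraint (\ref{eq:constraint1}), $b_3 s_J^2 = -r_G c_J^2$. Turning to the $z$ row, the equation $\tfrac{1}{2}(b_3+f_3)s_{2J}^2 + c_{2J}^2 r_G = r_G$ rearranges to $\tfrac{1}{2}(b_3+f_3)s_{2J}^2 = (1-c_{2J}^2)r_G = s_{2J}^2\,r_G$, and canceling $s_{2J}^2$ delivers constraint (\ref{eq:constraint2}), $f_3 = 2r_G - b_3$. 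Constraint (\ref{eq:constraint3}) is then algebraic bookkeeping on (\ref{eq:constraint1}): rewrite as $\tan^2(\sqrt{2}J) = -r_G/b_3$ and invert to obtain the stated closed form for $J$.

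The algebra above is essentially immediate; the substantive part of the argument is checking that the solution is physically admissible. I would verify that (i) the ratio $-r_G/b_3$ is positive so that $J$ is real --- this is compatible with standard thermal sign conventions for $\rho_G$ and with a choice of $\rho_E$ having opposite-sign population asymmetry; (ii) the $f_3 = 2r_G - b_3$ extracted from (\ref{eq:constraint2}) lies in $[-1,1]$ so that $\rho_R$ remains a valid state; and (iii) the free parameters $f_1, f_2$ can be chosen nonzero within $f_1^2+f_2^2+f_3^2 \le 1$, since it is exactly this transverse freedom that supplies the coherence resource promised by Theorem \ref{thm:coherence_seed}.

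The main obstacle is not the derivation of the three constraints but the minimality claim implicit in ``exactly three qubits.'' I would dispatch this by citing the two-qubit analysis of the preceding subsection: there it was shown that for the XX-type Hamiltonian, local environmental coherence alone cannot simultaneously populate the shift vector and the off-diagonal $T$-entries required by the general $\Phi_{GP}$ form, because the charge-conservation bookkeeping of Theorem \ref{thm:no_coherence} and its corollary rule out the necessary Pauli-string transitions without an additional coherence-carrying degree of freedom. The third qubit, coupled via a second XX link, is the minimal such degree of freedom consistent with U(1) symmetry, completing the minimality argument.
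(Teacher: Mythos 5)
Your derivation is correct and follows the same route as the paper's Appendix E: apply the affine map to the augmented Gibbs vector $(1,0,0,r_G)^T$, factor the $x$ and $y$ rows as $\Omega_i(B_1+B_2 r_G)=0$ to obtain $b_3 s_J^2=-r_G c_J^2$ (keeping $f_1,f_2\neq 0$), read off $b_3+f_3=2r_G$ from the $z$ row, and invert for $J$; your admissibility checks ($-r_G/b_3>0$, $|\vec f|\le 1$) match the paper's remarks. The minimality discussion you give is at the same (informal) level as the paper's own, which likewise relies on the two-qubit XX analysis rather than a general lower-bound argument.
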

See Appendix \ref{App5:fulldynmap} for the complete derivation showing how these constraints ensure $\Phi_{GP}(\rho_G) = \rho_G$.
 Unlike just environmental coherence, which only breaks phase-covariance, Gibbs-preservation demands: (1) Environmental coherence to enable non-trivial rotations, (2) Resource qubit coherence ($f_1, f_2 \neq 0$) for additional symmetry breaking, and (3) Fine-tuned correlations satisfying the constraint relationships in Eqs.~(\ref{eq:constraint1})-(\ref{eq:constraint3}) to achieve the exact cancellation needed for Gibbs-preservation.

 The thermal reference with parameter $b_3$, the resource qubit supplies the coherence needed to break phase covariance, the interaction strength $J$ must be precisely tuned to achieve the right balance between thermal and coherent effects, and the system of constraints ensures that the global dynamics remain energy-conserving while allowing local Gibbs-preservation. The constraint system admits solutions in two distinct regions, characterized by the signs of $b_3$ and $r_G$, with the resource parameter $f_3$ determined by the linear relationship $f_3 = 2r_G - b_3$. 

\begin{figure}[htbp]
\centering
\includegraphics[width=\linewidth]{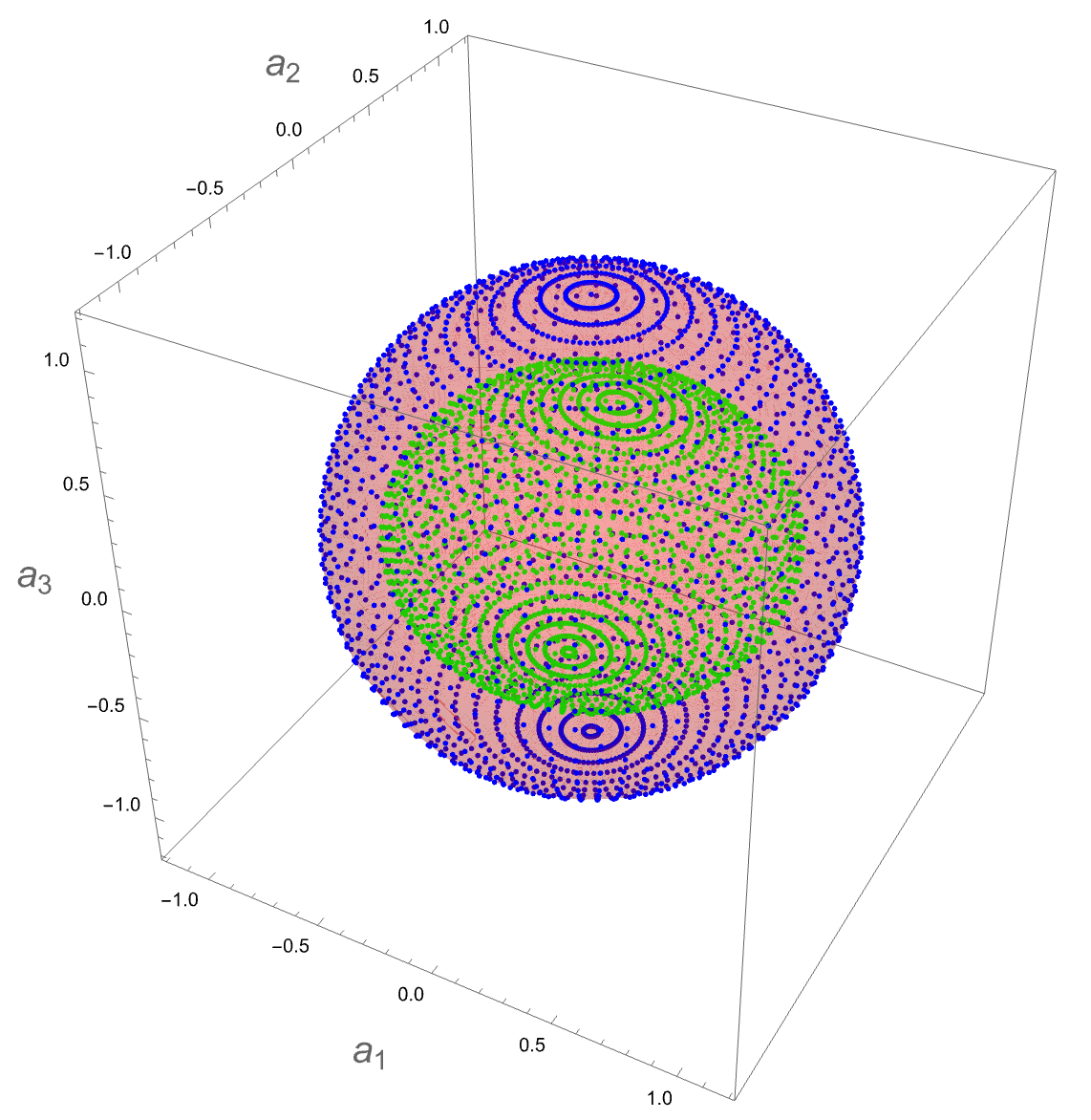}
\caption{Bloch sphere representation of the Gibbs-preserving map's action. Initial states (blue) are mapped to final states (green) under the map, demonstrating its contractive nature. Parameters: $b_3 = 0.3$, $h=\frac{\pi}{4}$, $J=0.25$, $f_1 =0.2$, $f_2= 0.1$.}
\label{fig:CPmapbloch}
\end{figure}

\subsection{Thermodynamic Advantages}

It has been shown that Gibbs-preserving maps offer thermodynamic advantages over phase-covariant thermal operations \cite{Faist_2015}. Here, we briefly quantify some advantages of the maps presented in the paper. First, we show the relative entropy difference between coherent states acted upon by the maps compared to the thermal state characterized by the Bloch vector $r_{th} =(0,0,r_G)$:
\begin{equation}
\Delta D = D(\Phi[\rho]||\rho_{th})
\end{equation}
This quantity is related to the extractable work from a system. Therefore, if $D(\Phi_{GP}[\rho]||\rho_{th})-D(\Phi_{PC}[\rho]||\rho_{th}) )= \delta >0$ then we conclude that $\delta$ is proportional to the excess extractable work when a system evolves under Gibbs-preserving dynamics vs thermal operations \cite{Kolchinsky_2025}. For both $\Phi_{PC}$ and $\Phi_{GP}$, the Bloch vector corresponding to the thermal state (shown as dotted line in Figure \ref{fig:DeltaD}) is an inert state $\Phi[\rho_{th}] = \rho_{th}$ and correspondingly the extractable work is null. For the map without the constraint for Gibbs-preservation or Phase-covariance, the inert state is not $\rho_{th}$.
\begin{figure}[htbp]
\centering
\includegraphics[width=\columnwidth]{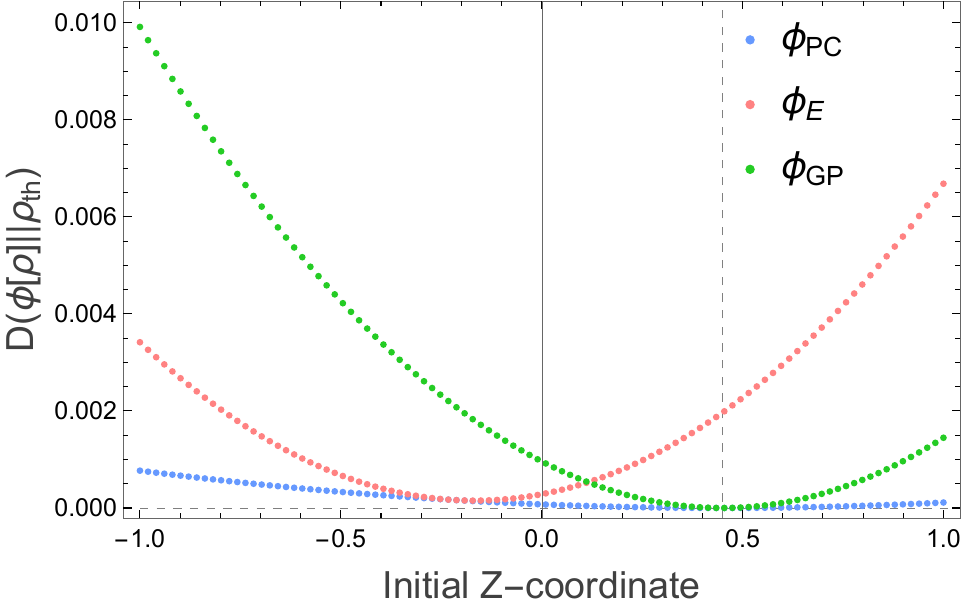}
\caption{Relative entropy difference ($\Delta D$) between incoherent states evolving under maps and the reference thermal state. The maps considered are the Gibbs-preserving map (green) ($f_1,f_2\neq 0$),  phase-covariant map (blue, $f_1=f_2= 0$) and a map with global energy conservation (pink) as a function of the initial incoherent state z-Bloch vector component. Higher $D$ values indicate a greater distance from thermal state and more extractable. Parameters: $J=0.5$ $b_3 = 0.3$, $r_G = 0.45$ $h=\frac{\pi}{4}$, $f_1 =0.2$, $f_2= 0.1$.}
\label{fig:DeltaD}
\end{figure}
The excess extractable work achievable through Gibbs-preserving dynamics demonstrates the concrete thermodynamic benefit of accessing the broader class of transformations. Since coherence generation is the distinguishing factor between the maps, we also show the coherence generation capacity of the maps starting from pure, incoherent state $\rho_0 = |0\rangle \langle 0|$. We consider the repeated action of the map on the initial state such that $\rho(n) = \Phi^n [\rho_0]$ for $n=2$ and show its effect on the coherence. We also present the rate at which correlations can decay under the action of the map starting from the incoherent state $\rho_+ = |+\rangle\langle +|$. 
\begin{figure}[htbp]
    \centering
    \begin{subfigure}{\columnwidth}
        \centering
        \includegraphics[width=\columnwidth]{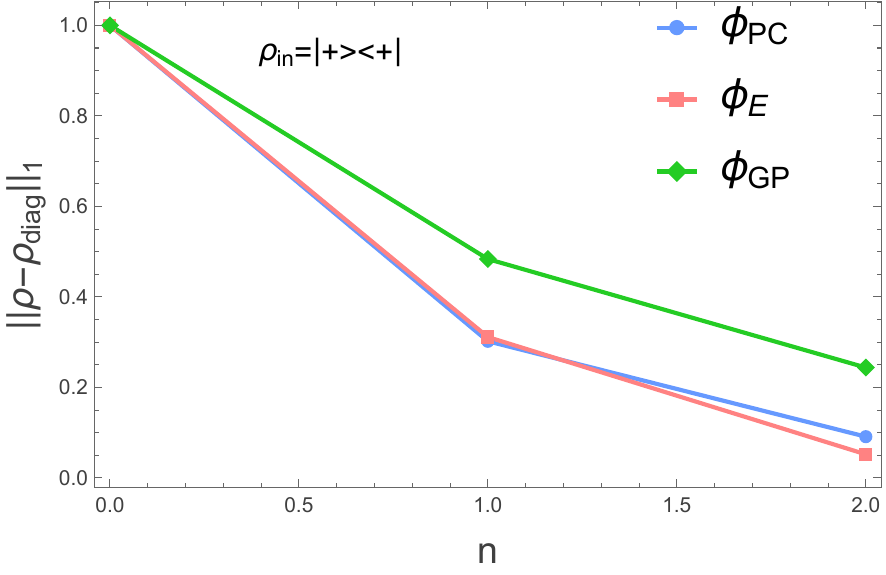}
        \label{fig:cpohdec}
    \end{subfigure}
    \vspace{1em} 
    \begin{subfigure}{\columnwidth}
        \centering
        \includegraphics[width=\columnwidth]{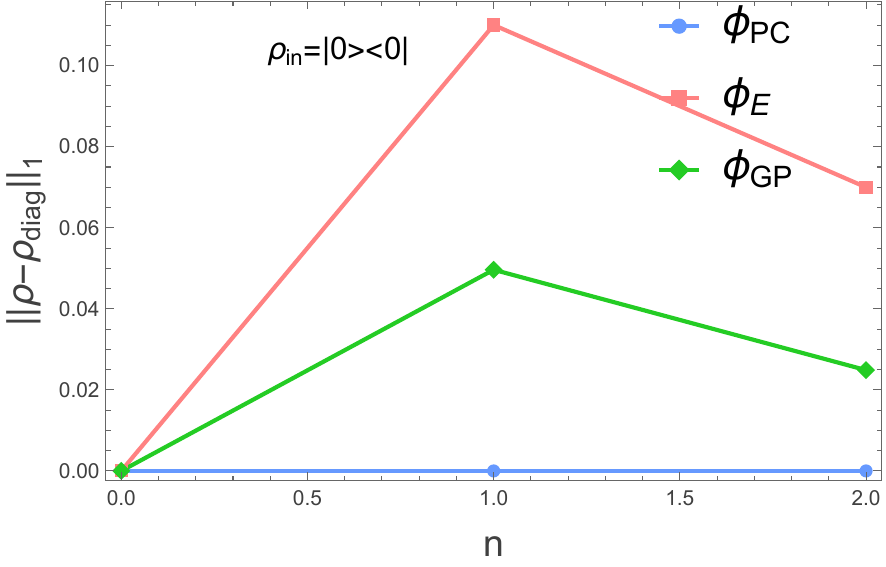}
        \label{fig:cohgen}
    \end{subfigure}
    \caption{The L1 norm between the of the off-diagonal part of the qubit density matrix under repeated action of the map for n =2 steps. The top figure shows the decay of correlations starting from the initial state with maximum coherence $\rho_{in}= |+\rangle\langle+|$ and the bottom figure shows the growth of correlations starting from the initial state with no coherence $\rho_{in}= |0\rangle\langle0|$ and the The maps considered are the Gibbs-preserving map (green) ($f_1,f_2\neq 0$),  phase-covariant map (blue, $f_1=f_2= 0$) and a map with global energy conservation (pink) as a function of the initial incoherent state z-Bloch vector component. Parameters: $J=0.5$ $b_3 = 0.3$, $r_G = 0.45$ $h=\frac{\pi}{4}$, $f_1 =0.2$, $f_2= 0.1$.}
    \label{Cohgenanddec}
\end{figure}
These plots show that the coherence decay is slower under $\Phi_{GP}$ compared to $\Phi_{PC}$. Starting from a coherent state $\Phi_{GP}$ can develop coherences whereas $\Phi_{PC}$ corresponding to thermal operations cannot. Lastly, we show the rate of convergence of the map. Starting from the initial state $\rho_0 = |0\rangle \langle 0|$, we apply the map repeatedly and compute the L1-norm between the density matrices at subsequent steps until the norm return 0.  We can see that the phase-covariant map converges to the thermal state quickest while the Gibbs-preserving map is the slowest. The general map is intermediate in terms of rate of convergence.  
These advantages are non-trivial and depend on the specific parameter choices, showing that there exist regimes where the investment in quantum resources yields measurable thermodynamic returns. The numerical analysis confirms that both information-theoretic and work extraction benefits are achievable across the feasible parameter space, validating the practical significance of our theoretical framework. These results establish that the systematic approach to overcoming symmetry constraints through quantum resource engineering provides genuine advantages over classical thermal approaches, with implications for designing efficient quantum thermal devices.
\begin{figure}[htbp]
    \centering
\includegraphics[width=0.8\columnwidth]{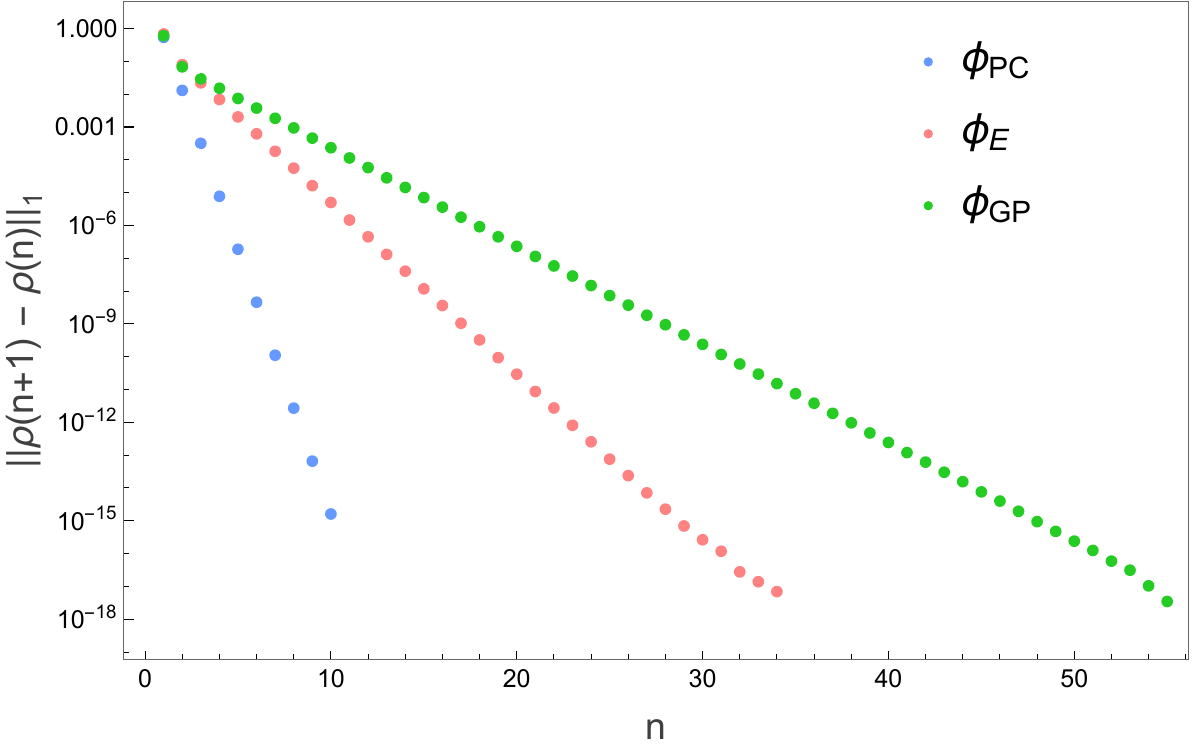}
    \caption{The rate of convergence of the maps when acting on $\rho_{in} = |0\rangle\langle0|$. We plot the L1-norm of difference between the qubit density matrix upon subsequent action of the map for n steps until the norm is 0. The convergence is fasted for $\Phi_{PC}$ (blue) followed by $\Phi_E$ (pink) and is slowest for $\Phi_{GP}$ (green). Parameters: $J=0.5$ $b_3 = 0.3$, $r_G = 0.45$ $h=\frac{\pi}{4}$, $f_1 =0.2$, $f_2= 0.1$.}
    \label{fig:Convergence}
\end{figure}
\section{Conclusions}
We have established a comprehensive framework for understanding how quantum symmetries constrain thermodynamic transformations and how these constraints can be systematically overcome through quantum resources. Our key findings include:
\begin{enumerate}
\item A No-go theorem showing that U(1) symmetric operations cannot generate local coherence from thermal states
\item A complete characterization of the resource hierarchy from thermal operations to Gibbs-preserving maps to general energy-preserving maps
\item An explicit construction demonstrating how minimal quantum resources can achieve Gibbs-preservation
\item Quantitative measures showing the thermodynamic advantages of accessing broader transformation classes
\end{enumerate}
From a resource theory perspective, our work identifies the precise resources needed to move beyond thermal operations: environmental coherence and fine-tuned correlations. This provides a roadmap for understanding when and how quantum resources can enhance thermodynamic performance \cite{roadmap}.\\
Our hierarchy of thermodynamic operations (Fig. 1) complements characterization of covariant Gibbs-preserving maps by explicitly demonstrating the symmetry-breaking mechanisms required to transition between operation classes. While previous work established that free energy governs state convertibility with correlated catalysts, we show precisely how environmental coherence and correlation resources enable this symmetry breaking \cite{shiraishi2024quantumthermodynamicscoherencecovariant}.
Looking forward, this work opens several avenues for exploration, including the extension to larger systems, the investigation of other symmetry classes, and the development of optimal protocols for resource-efficient thermodynamic transformations. The interplay between symmetry, coherence, and thermodynamics promises to remain a rich source of fundamental insights in quantum physics.
\section*{Acknowledgments}
The author thanks Marcus Huber, Sarah Shandera, and Tommy Chin for valuable comments and discussions throughout the development of this work.
\bibliography{main}
\begin{widetext}
\appendix

\section{U(1) symmetric qubit unitaries }
\label{App1:Unitarystrings}
This appendix provides a mathematical characterization of unitary operators that commute with the U symmetry generator $G = \sum_i \sigma_Z^{(i)}$. As established in the main text, such unitaries exhibit a block-diagonal structure: $U=\mathcal{U}_0 \oplus \mathcal{U}_1 \oplus \ldots \oplus \mathcal{U}_N$, where each $\mathcal{U}_m$  corresponds to an independent unitary matrix within the subspace with $m$ qubits is $|1\rangle$ state, and is of dimensions $^NC_m\times ^NC_m$. Any element within this subspace can be written as
\begin{equation}
\mathcal{U}_m = \sum_{\alpha,\beta \in \mathcal{S}_m} u_{\alpha\beta} |\alpha\rangle\langle \beta|
\end{equation}
where $\mathcal{S}_m$ denotes all computational basis states with magnetization $m$. Each matrix element $u_{\alpha\beta}$ contributes to the Pauli expansion of $U$. States with $m$ excitations can be written as $|\alpha\rangle = |s_1, s_2, \ldots, s_N\rangle$ where $s_i \in {0, 1}$ and $\sum s_i = m$. In the ladder operator expansion, this state is $|\alpha \rangle = \left(\bigotimes_{i=1}^N O_i^{(\alpha)} \right) |00\ldots 0\rangle$ each $O_i ^{(\alpha)}\in\{\sigma^+,I,\sigma_Z\}$ with m of them being $\sigma^+$. Therefore, any matrix element inside the subspace with $m$ up states is given by 
\begin{align}
    \mathcal{U}_m &= \sum_{\alpha,\beta \in \mathcal{S}_m} u_{\alpha\beta}|\alpha\rangle\langle \beta|\\&=\sum_{\alpha,\beta \in \mathcal{S}_m} u_{\alpha\beta}\left(\bigotimes_{i=1}^N O_i^{(\alpha)}\right) |00\ldots0\rangle \langle 00\ldots 0|\left(\bigotimes_{j=1}^N O_j^{\dagger(\beta)}\right)
    \\
    &= \sum_{\alpha,\beta \in \mathcal{S}_m} u_{\alpha\beta}\left(\bigotimes_{i=1}^N O_i^{(\alpha)}\right) . Q. \left(\bigotimes_{j=1}^N O_j^{\dagger(\beta)}\right)
\end{align}
where $Q = |00\ldots0\rangle\langle00\ldots0|$ is the reference state. This projector onto the all-zero state can be expanded in terms of Pauli $\sigma_Z$ and identity operators for $N$ qubits:
$$ Q = \frac{1}{2^N} \bigotimes_{i=1}^N (I^{(i)} - \sigma_Z^{(i)}) $$
This expansion shows that $Q$ is a sum of $2^N$ Pauli strings, where each string $\tilde{O}_j$ is a tensor product of $I$ and $\sigma_Z$ operators.

 By construction the number of $ \sigma^+$ and $\sigma^-$ operators for any $|\alpha\rangle \langle \beta|$ for the unitary within a subspace is equal; however, the index at which they occur may be different. When expanding the product $\left( \otimes O_i ^{(\alpha)} \right) Q \left( \otimes O_i ^{(\beta)} \right)$ into Pauli strings, several cases arise for each qubit index. If $O_j ^\alpha =\sigma^+$ and $O_j ^{\dagger\beta }=\sigma^-$ occur on the same index j, then, depending on whether $\tilde{O}_j$ is identity or $\sigma_z$, the resultant operator on index j is either $\pm I$, but crucially cannot be $\sigma^\pm$. If the operator on the left has $\sigma^+$ on the j index and the operator on the right has $I$ or $\sigma_z$ on the j index then depending on the operator on $\tilde{O}_j$ the resultant operator on index j is $\pm \sigma^+$. If all the three operators have I or $\sigma^z$ on the j index then the resultant will also be either I or $\sigma^z$. Furthermore, terms of the form $\sigma^+_j \sigma^-_k$ where $j \neq k$ contribute to $\sigma_X$ or $\sigma_Y$ components across different sites. \\ 
 The total number of $\sigma_X/\sigma_Y$ operators arising from such products within a charge-conserving term is always even. This can be seen as follows. Operators of the form $O^\alpha = \otimes O _i ^\alpha$ (right, $O^\beta = \otimes O_j ^\beta$) contain $m$ $\sigma^{+(-)}$ at $m$ out of $N$ indices. Let $k$ out of $m$ be indices where the operators $O^\alpha$ and $O^\beta$ concurrently have $\sigma^+$ and $\sigma^-$ respectively. For $m-k$ sites, $\sigma^+$ in $O^\alpha$ $\sigma^-$ on $O^\beta$ are on mismatched indices. This implies that of $N$: $(m-k)$ indices have operators $\pm\sigma^+$, $(m-k)$ indices have operators $\pm\sigma^-$ and the remaining indices have $I$ or $\sigma_Z$. This follows for all the terms in the expansion of $Q$ and the unitary within each block with fixed $m$. The full unitary operator is then the sum of all such possible operators.\\
The space of all operators on an $N$-qubit system has dimension $4^N$. The number of linearly independent Pauli operators that commute with $G=\sum_i \sigma_Z^{(i)}$ are precisely those composed only of $\sigma_Z$ and $I$ operators, or products of an even number of $\sigma_X$ and $\sigma_Y$ operators. This means that for any such operator $O = \bigotimes_{i=1}^N P_i$, the number of $P_i \in {\sigma_X, \sigma_Y}$ must be even.
The total number of such charge-conserving Pauli strings, $N_G$, for $N$ qubits is given by:
$$ N_G = \sum_{j=0}^{\lfloor N/2 \rfloor} \binom{N}{2j} 2^{2j} 2^{N-2j} = 2^N \sum_{k=0}^{\lfloor N/2 \rfloor} \binom{N}{2j} $$
Here, $\binom{N}{2j}$ selects $2j$ qubits indices for $\sigma_X$ or $\sigma_Y$ operators. $2^{2j}$ accounts for the choice of $\sigma_X$ or $\sigma_Y$ for these $2k$ qubits. $2^{N-2j}$ accounts for the choice of $I$ or $\sigma_Z$ for the remaining $N-2j$ qubits.\\
The ratio of these allowed operators to all possible $4^N$ operators is $\frac{N_G}{4^N} = \frac{2^N \sum_{j=0}^{\lfloor N/2 \rfloor} \binom{N}{2j}}{4^N} = \frac{1}{2^N} \sum_{j=0}^{\lfloor N/2 \rfloor} \binom{N}{2j}$. The sum for $N>0$ is $2^{2N-1}$, implying the ratio is $1/2$. While half of all Pauli strings commute with $G$, the space of all possible unitaries is vast.\\
\section{Charge of Pauli strings under U(1) evolution}
\label{App3:Paulicharge}
\begin{proof}
We divide the Pauli operators into two groups of commuting with $\sigma_z$ operator $P_c: \{I, \sigma_z\}$ and non-commuting terms $P_{\bar c}: \{\sigma_x, \sigma_y\}$. These are not subgroups since multiplications can result in operators mixing with one another $P_{c} P_{\bar{c}} \in P_{\bar{c}}$ and $P_{\bar{c}} P_{\bar{c}} \in P_c$. Let us associate a charge, $C(\cdot)$, to operators $C(P_c) \to 1$ and $C(P_{\bar{c}}) \to -1$ to allow us to count the number of $P_{\bar{c}}$ terms in a Pauli string after action of unitary rotation. \\
The charge for a Pauli string is then given by the product of the charge from operator from each index. By definition, since U(1)
 symmetric rotations have even number of $\sigma_x$ and $\sigma_y$ operators, the charge for each Pauli string in the operator expansion of the unitary or its conjugate will be +1.\\
 Let us consider a Pauli string, S, of length N with 
 $$
S = P_1\otimes P_2\otimes \ldots\otimes P_N \quad \text{where each } C(P_j) = \pm 1
$$
The evolution of this operator is given by $S\to S' = U.S.U^\dagger$ and the charge of the operator is $C(S) = \prod_{j=1}^N C(S_j) = c \text{ where } c= \pm 1$
Let Pauli strings in the operator expansion of the U be of the form 
$$
L = L_1\otimes L_2\otimes \ldots\otimes L_N \quad \text{where each } C(L_j) = \pm 1, \text{ with an even number with charge } -1\text{'s}
$$
and the strings in the operator expansion of the $U^\dagger$ be of the form 
$$
R = R_1\otimes R_2\otimes \ldots\otimes R_N \quad \text{where each } C(R_j) = \pm 1, \text{ with an even number with charge } -1\text{'s}
$$
The total charge of the string $L$ and $R$ is $C(L) = +1$. Then if,
$
C(L_j) = C(R_j), \text{ then } C(S_j) \text{ is unchanged}
$ and if
$
C(L_j) \ne C(R_j), \text{ then } C(S_j) \rightarrow -C(S_j)
$
Therefore, the resultant string \( S' \) has:
$$
C(S_j') =
\begin{cases}
C(S_j), & \text{if } C(L_j) = C(R_j) \\
- C(S_j), & \text{if } C(L_j) \ne C(R_j)
\end{cases}
$$
Let \( f \) be the number of indices where \( C(L_j) \ne C(R_j) \). Each of these indices contributes to a flipping of charge resulting in a total charge flip of  $C(S')=(-1)^f C(S)$. \\
Independently, let there be m indices where $C(L_i) = C(R_i) =-1 $. If m is even, then L and R are restricted to have only an even number of indices, $e_L$ and $e_R$, respectively, where the charge is -1. Since these are mismatched, the total number of mismatched indices will be $e_L + e_R$ which is necessarily even. If instead m is odd, then L and R are restricted to have only an odd number of indices, $o_L$ and $o_R$, respectively, where the charge is -1. Since these are mismatched, the total number of mismatched indices will be $o_L + o_R$ which is also necessarily even. \\
This shows that the number of mismatched indices has to be even, implying that $f$ is even. This implies that the change in charge under evolution is 
$$
C(S')= (-1)^f C(S) = (-1)^f c = c \hspace{0.3em}\text{ when $f$ is even}
$$
This proves that the charge of each Pauli string is independently conserved under the action of U(1) terms. \end{proof} 
\begin{corollary}
    N-body correlation terms between the system and environment that have even charge, or equivalently, an even number of $\sigma_x + \sigma_y$ operators, cannot contribute to the shift parameter of rotation matrix in the $x$ and $y$ direction in the dynamical map.   
\end{corollary}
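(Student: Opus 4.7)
The plan is to combine the charge conservation theorem established in Appendix \ref{App3:Paulicharge} with a careful analysis of which Pauli strings survive the partial trace over the environment and land on the $\sigma_X$ or $\sigma_Y$ component of the system. Since $\tau_x$ and $\tau_y$ are the coefficients of $\sigma_X \otimes I^{\otimes(N-1)}$ and $\sigma_Y \otimes I^{\otimes(N-1)}$ in the evolved reduced state $\mathrm{Tr}_E[U \rho_{SE} U^\dagger]$, the first step is to identify the ``target'' Pauli strings on the full $N$-qubit space whose partial trace yields these operators. Because $\mathrm{Tr}(I)=2$ and $\mathrm{Tr}(\sigma_{X,Y,Z})=0$, the only strings that survive $\mathrm{Tr}_E$ with nonzero weight on $\sigma_X$ (or $\sigma_Y$) at the system site are those of the form $\sigma_{X(Y)} \otimes I \otimes I \otimes \cdots \otimes I$.

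Next, I would compute the charge of these target strings: each has exactly one $\sigma_X$ or $\sigma_Y$ operator, so $n_x + n_y = 1$ and $C = -1$. At this point the corollary becomes essentially a one-line consequence of the preceding theorem. Since $U$ is U(1)-symmetric, the charge conservation result proved in Appendix \ref{App3:Paulicharge} asserts that every Pauli string in the operator expansion of $U(\cdot) U^\dagger$ retains its charge. Consequently, any term in the Pauli expansion of $\rho_{SE}$ that carries charge $+1$ (i.e.\ an even number of $\sigma_X$ and $\sigma_Y$ factors, summed over system and environment) can only evolve into a sum of charge $+1$ strings and therefore cannot produce any $\sigma_{X(Y)} \otimes I^{\otimes (N-1)}$ component.

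To finish, I would apply this observation to the specific decomposition $\rho_{SE} = \rho_S \otimes \rho_E + \chi_{SE}$ and restrict attention to the $N$-body correlation pieces in $\chi_{SE}$. The shift parameters $\tau_x$ and $\tau_y$ receive contributions only from terms whose evolved image has support on the target strings above; by the charge argument these contributions must originate from initial Pauli strings of charge $-1$. Any even-charge correlation term is therefore barred from contributing to $\tau_x$ or $\tau_y$, which is exactly the statement of the corollary.

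The argument has no real obstruction once the previous theorem is in hand; the only subtlety, and the step I would be most careful about, is making explicit that partial tracing on the environment acts linearly and string-by-string, so that each initial Pauli string evolves independently and its final charge dictates whether it can land on a target string. I would spell this out in one or two sentences, noting that linearity allows us to track charge per basis term without any interference between summands, before invoking the conservation law to close the proof.
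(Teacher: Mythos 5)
Your proposal is correct and follows essentially the same route as the paper: identify the target strings $\sigma_{X(Y)}\otimes I^{\otimes(N-1)}$ that survive the partial trace and feed $\tau_x,\tau_y$, note they carry charge $-1$, and invoke the charge-conservation theorem of Appendix~\ref{App3:Paulicharge} to rule out any contribution from even-charge ($C=+1$) initial correlation terms. Your added remarks on linearity of the partial trace and string-by-string charge tracking only make explicit what the paper leaves implicit.
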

\begin{proof}
    To contribute to the shift vector or the rotation matrix in the $x$ and $y$ direction in the dynamical map, the evolution of a Pauli string, S, has to be
    \begin{equation}
        S' = U.S.U \propto \sigma_x \otimes I\otimes I\ldots\otimes I + \text{other terms}
    \end{equation}
If the number of terms of $\sigma_x$ and $\sigma_y$ operators in the Pauli decomposition of S is even, then the  the charge of S is $C(S) = (-1)^{n_x+n_y} = +1$. But the charge of the desired term is $C(\sigma_x \otimes I\otimes I\ldots\otimes I) = \prod_i C(\sigma_x)C(I)^{N-1} = -1$ . For this to happen, $C(S)=-1 \to C(S') = +1 $ but this is forbidden by U(1) symmetric rotations and therefore cannot be generated. 
\end{proof}
\section{Proof for the necessity of coherence in environment to have non zero shift under U(1) dynamics }
\label{App2:Nogo}
This section rigorously examines the conditions under which quantum coherences can be generated in a system evolving under U(1)-symmetric dynamics. We consider the implications for a one-qubit system in contact with N-1 qubits in an incoherent state,  initially prepared in a product state where each local density matrix is diagonal in the computational basis. Such states arise when system is in contact with a larger system in thermal equilibrium, lacking coherences.\\
Let the initial state of the $N$-qubit system be $\rho = \bigotimes_{i=1}^N \rho_i$, where each the system qubit (index i = 1) is a general qubit state while the rest of the N-1 qubits are diagonal in the computational basis. This means $\rho_i = p_{i,0}|0\rangle\langle0| + p_{i,1}|1\rangle\langle1|$. Consequently, the total state $\rho$ can be expressed as a linear combination of product operators where the operator at index $i=1$ can be any Pauli operator but for all the other indices, the operators are either identity ($I_k$) or $\sigma_{Z,k}$ operators:
$$ \rho = \sum_j \alpha_j \tilde{O}_j $$
The total system evolves under a unitary $U$ that conserves the total excitation number, meaning it commutes with the generator of U symmetry, $G = \sum_i \sigma_{Z,i}$, i.e., $[U, G]=0$. As established in Appendix A, any Pauli string $P = P_1 \otimes \cdots \otimes P_N$ with a non-zero coefficient in the expansion of such a U(1)-symmetric unitary $U$ must contain an even number of $\sigma_X$ or $\sigma_Y$ operators across all $N$ qubits.\\
\begin{theorem}
Number-conserving dynamics, governed by a U(1)-symmetric unitary, cannot generate local quantum coherence from an initial product state that is diagonal in the computational basis.
\end{theorem}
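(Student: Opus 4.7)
The plan is to reduce this theorem to the charge-conservation lemma already established in Appendix~\ref{App3:Paulicharge}. At a high level, the hypothesis forces every Pauli string in the expansion of the initial state to carry charge $+1$, the evolution preserves the charge of each string string-by-string, and any local coherence on the system qubit would necessarily originate from a Pauli string of charge $-1$. The three observations together rule out such coherence.

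First, I would expand the initial state in the Pauli basis. Since every factor $\rho_i$ is diagonal in the computational basis, it is a linear combination of $I$ and $\sigma_Z$, so
\begin{equation}
\rho = \bigotimes_{i=1}^{N} \rho_i = \sum_j \alpha_j \tilde O_j,
\end{equation}
where each $\tilde O_j$ is a tensor product consisting solely of $I$'s and $\sigma_Z$'s. By the definition of charge used throughout the paper, $C(\tilde O_j) = +1$ for every $j$. Next, I would invoke the lemma from Appendix~\ref{App3:Paulicharge}: because $U$ is U(1)-symmetric, every Pauli string appearing in the expansion of $U \tilde O_j U^\dagger$ has the same charge as $\tilde O_j$. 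Linearity of the conjugation action then implies that every Pauli string with nonzero coefficient in $U\rho U^\dagger$ has an even number of $\sigma_X$ and $\sigma_Y$ factors.

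Finally, I would reduce to the system qubit by partial tracing over qubits $2,\ldots,N$. Using $\mathrm{Tr}(I)=2$ and $\mathrm{Tr}(\sigma_{X,Y,Z})=0$, only those strings whose environment factors are all $I$ survive. The coherent components of $\rho_S' = \mathrm{Tr}_{2\ldots N}(U\rho U^\dagger)$ are therefore controlled by the coefficients of $\sigma_X \otimes I^{\otimes (N-1)}$ and $\sigma_Y \otimes I^{\otimes (N-1)}$. Each of these strings carries charge $-1$, which the previous step has forbidden from appearing in $U\rho U^\dagger$. Hence the two offending coefficients vanish, and $\rho_S'$ is diagonal in the computational basis.

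I do not anticipate a genuine obstacle here: once the charge-conservation lemma is assumed, the proof is a clean bookkeeping argument. The only care required is checking that partial tracing cannot mix Pauli strings of different charges, which is immediate because the trace acts factor-by-factor on tensor products of Paulis and annihilates every non-identity Pauli on the traced-out sites. If anything is delicate, it is merely restating this observation cleanly so that the jump from "no charge $-1$ Pauli string in the global operator" to "no local coherence on the system" is transparent.
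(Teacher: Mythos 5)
Your proof is correct and follows essentially the same route as the paper's: both arguments rest on the fact that a U(1)-symmetric unitary expands into Pauli strings with an even number of $\sigma_X/\sigma_Y$ factors, so conjugation preserves the charge of each string, and a charge $-1$ string such as $\sigma_X \otimes I^{\otimes(N-1)}$ can therefore never appear starting from an all-charge-$+1$ diagonal state. Your explicit handling of the partial trace (only strings with identity on every environment site survive) is a slightly cleaner presentation of a step the paper leaves implicit, but it is not a different argument.
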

\begin{proof}
First, let us consider that the density matrix of each of the qubits, including the system qubit, is initially diagonal in the computational basis. The evolution of the total density matrix is given by $U \rho U^\dagger = U \left( \sum_j \alpha_j \tilde{O}_j \right) U^\dagger = \sum_j \alpha_j U \tilde{O}j U^\dagger$.We need to examine the structure of $U \tilde{O} U^\dagger$ where each $\tilde{O}$ is a N-length  Pauli string with $I_k$ and $\sigma_{Z,k}$ operators.\\
Let $U = \sum_i c_{Li} L_i$ be the Pauli expansion of $U$, and $U^\dagger = \sum_{i} c_{Ri} R_i$ (where $R_i$ are also Pauli strings). A generic term in the expansion of $U \tilde{O} U^\dagger$ is of the form $c_{Li} c_{Rj} L_i \tilde{O} R_j$. For the evolved state to generate local coherences in any qubit density matrix, say the qubit at index $k$, the evolved density matrix must contain terms proportional to $\sigma_{X,k}$ or $\sigma_{Y,k}$ at that qubit index.
Consider the product $L_i \tilde{O} R_j$. The operator $\tilde{O}$ consists only of $I_k$ and $\sigma_{Z,k}$ for each qubit $k$. These operators do not change the parity of $\sigma_X$ or $\sigma_Y$ operators in a Pauli string. Therefore, the total number of $\sigma_X$ or $\sigma_Y$ operators in $L_i \tilde{O} R_j$ is determined solely by the combined number of $\sigma_X$ or $\sigma_Y$ operators in $L_i$ and $R_j$.\\
Since both $L_i$ and $R_j$ originate from the expansion of a U(1)-symmetric unitary, each must contain an even number of $\sigma_X$ or $\sigma_Y$ operators. Consequently, their product $L_i \tilde{O} R_j$ must also contain an even number of $\sigma_X$ or $\sigma_Y$ operators in total. However, generating a coherence term requires that the evolved operators have a term that has $\sigma_{X,j}$ at some index j and identity elsewhere. Since this would imply an odd number of $\sigma_X$ in the resulting Pauli-string, it directly contradicts the requirement that $L_i \tilde{O} R_j$ must have an even number of $\sigma_X$ or $\sigma_Y$ operators and hence cannot be generated.
\end{proof}
\begin{corollary}
     If each $\rho_i$ is diagonal, then the reduced state of any subsystem, $\rho_S' = \text{Tr}_E(U\rho U^\dagger)$, will also be diagonal in the computational basis for $S$.
\end{corollary}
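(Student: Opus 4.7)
The plan is to reduce the corollary to an immediate consequence of the charge-conservation machinery of Appendix \ref{App3:Paulicharge} combined with the action of the partial trace on Pauli strings. First I would expand the initial state $\rho = \bigotimes_i \rho_i$ in the Pauli basis: since every $\rho_i = \tfrac{1}{2}(I + r_i \sigma_Z)$ is diagonal, the total state is a sum of Pauli strings built only from $I$ and $\sigma_Z$ factors, each carrying charge $C = +1$. By the charge-conservation lemma the evolved state $U\rho U^\dagger$ remains a sum of charge-$+1$ strings, i.e.\ Pauli strings with an even total count of $\sigma_X$ and $\sigma_Y$ factors.

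Next I would observe that the partial trace annihilates any Pauli string whose factor on $E$ is not the identity, so only terms of the form $P_S \otimes I_E$ survive in $\rho_S' = \text{Tr}_E(U\rho U^\dagger)$. Since $C(P_S \otimes I_E) = C(P_S)$, the conservation constraint forces $P_S$ to have an even number of $\sigma_X$ and $\sigma_Y$ factors. For a single-qubit subsystem $S$ the only admissible Pauli operators are $P_S \in \{I,\sigma_Z\}$, giving $\rho_S' = \tfrac{1}{2}(I + a_z \sigma_Z)$, diagonal in the computational basis. Equivalently, one can compute the relevant Bloch components directly as $a_{x/y} = \text{Tr}[(\sigma_{X/Y}\otimes I_E)\,U\rho U^\dagger]$ and invoke Theorem~\ref{thm:no_coherence} to conclude both vanish.

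The main subtlety is the scope of the phrase ``any subsystem''. For a multi-qubit $S$, the charge constraint only forces an even total count of $\sigma_X,\sigma_Y$ factors across $S$, which still permits terms like $\sigma_X \otimes \sigma_X$ supported inside $S$ and hence off-diagonal elements between computational basis states of equal magnetization on $S$. The clean ``diagonal in the computational basis'' statement therefore holds rigorously for single-qubit reductions---the setting relevant to the single-qubit dynamical maps of the main text---whereas for larger $S$ the sharpest symmetry-respecting conclusion is block-diagonality with respect to the excitation-number decomposition of $S$. I would flag this distinction explicitly in the write-up so the corollary is not over-read beyond what charge conservation supports.
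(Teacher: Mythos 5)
Your argument is correct and follows the same route the paper implicitly takes: the corollary is meant to fall out of the charge-conservation theorem, and you supply the two steps the paper leaves tacit --- that the partial trace kills every Pauli string whose factor on $E$ is not the identity, and that the surviving $P_S\otimes I_E$ terms inherit charge $+1$, which for a single-qubit $S$ forces $P_S\in\{I,\sigma_Z\}$. Your caveat about multi-qubit subsystems is a genuine and worthwhile observation: charge parity alone only constrains the \emph{total} number of $\sigma_X,\sigma_Y$ factors on $S$ to be even, so it does not exclude terms such as $\sigma_X\otimes\sigma_X$ supported inside $S$, and indeed the block-diagonal structure $U=\bigoplus_m \mathcal{U}_m$ generically produces off-diagonal elements of $U\rho U^\dagger$ between equal-magnetization states, which survive the partial trace whenever the environment factors coincide. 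So ``diagonal in the computational basis'' is only guaranteed for single-qubit reductions (the case actually used in the main text), while the sharp statement for larger $S$ is block-diagonality with respect to the local excitation number --- and note that this stronger conclusion requires the excitation-number block structure of $U$, not merely the charge-parity bookkeeping. Flagging that distinction, as you do, strengthens rather than weakens the corollary's intended use.
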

The preceding no-go theorem on coherence generation relies on the absence of initial coherences in the system and environment. This constraint is lifted if the initial state contains pre-existing coherences.\\
Consider an initial state where one or more environment qubits possess coherences in the computational basis, for example, an environment qubit $j$ starting in the state $\rho_j = \frac{1}{2}(I_j + \beta \sigma_{X,j})$ (where $\beta$ is a real coefficient). In this scenario, the initial state $\rho$ will include terms like $\tilde{O}j$ that contain $\sigma_X$ or $\sigma_Y$ operators. For example, the term involving $\sigma_{X,j}$ in $\rho_j$ has an odd number of $\sigma_X/\sigma_Y$ operators (one, $\sigma_{X,j}$ itself).\\
When we analyze the evolution of such a term, $U (\ldots \otimes \sigma_{X,j} \otimes \ldots) U^\dagger$, we consider products $L_i (\ldots \otimes \sigma_{X,j} \otimes \ldots) R_j$. The total number of $\sigma_X/\sigma_Y$ operators in this product will now come from the product of (even from $L_i$) , (odd from $\sigma_{X,j}$) and (even from $R_j$), which results in an odd total number of $\sigma_X/\sigma_Y$ operators.\\
This odd parity allows for the creation of local coherences on other qubits (e.g., a system qubit) while maintaining overall consistency with the U-symmetry of $U$. Specifically, this odd number of $\sigma_X/\sigma_Y$ operators can manifest as a single $\sigma_X$ or $\sigma_Y$ term on a system qubit (e.g., $\sigma_{X,1} \otimes I_{\text{rest}}$), thus generating local coherence.

\section{Dynamical map under general U(1) unitary}
\label{App4:dynmapunderU}
Consider a general unitary operator $U$ acting on two qubits that commutes with the U(1) generator $G = \sigma_Z \otimes I + I \otimes \sigma_Z$. This constraint ensures energy conservation when the local Hamiltonians have matched energy scales. The most general form of such a unitary can be parametrized using five real parameters:
\begin{equation}
  \Phi=  \left(
\begin{array}{cccc}
 e^{i \text{$\varphi $0}} & 0 & 0 & 0 \\
 0 & e^{-i (\alpha +\text{$\varphi $1})} \cos \left(\frac{\theta }{2}\right) & -e^{i (\text{$\varphi $1}-\alpha )} \sin \left(\frac{\theta }{2}\right) & 0 \\
 0 & e^{-i (\text{$\varphi $1}-\alpha )} \sin \left(\frac{\theta }{2}\right) & e^{i (\alpha +\text{$\varphi $1})} \cos \left(\frac{\theta }{2}\right) & 0 \\
 0 & 0 & 0 & e^{i \text{$\varphi $2}} \\
\end{array}
\right)
\end{equation}
 This parametrization exhibits the characteristic block-diagonal structure of U(1) symmetric unitaries, with independent blocks acting on subspaces of fixed excitation number: the zero-excitation subspace $\{|00\rangle\}$, the one-excitation subspace $\{|01\rangle, |10\rangle\}$, and the two-excitation subspace $\{|11\rangle\}$.\\
 Let the initial states of the system and environment qubits be characterized by Bloch vectors $\vec{a} = (a_1, a_2, a_3)$ and $\vec{b} = (b_1, b_2, b_3)$, respectively. The evolution $\rho_{SE} \to U\rho_{SE}U^\dagger$ followed by partial trace over the environment yields the effective dynamical map on the system qubit. To present the resulting map in compact form, we introduce the following auxiliary variables: - $\xi = \frac{1}{2}(2\alpha - \varphi_0 - 2\varphi_1 + \varphi_2)$ - $\psi_+ = \alpha - \varphi_0 + \varphi_1$ - $\psi_- = \alpha + \varphi_1 + \varphi_2$ - $\chi = \frac{\varphi_0 + \varphi_2}{2}$ The dynamical map on the system qubit takes the affine form $\vec{a}' = \vec{\tau} + T\vec{a}$, where: 
 \begin{equation}
     \Phi = \begin{pmatrix} 1 & 0 & 0 & 0 \\ \tau_x & T_{11} & T_{12} & T_{13} \\ \tau_y & T_{21} & T_{22} & T_{23} \\ \tau_z & T_{31} & T_{32} & T_{33} \end{pmatrix}
      \end{equation}
      with matrix elements:  \begin{align*}
      \tau_x &= \sin\left(\frac{\theta}{2}\right) \sin(\chi) \left[b_1 \sin(\xi) + b_2 \cos(\xi)\right]\\
      \tau_y &= \sin\left(\frac{\theta}{2}\right) \sin(\chi) \left[b_2 \sin(\xi) - b_1 \cos(\xi)\right]\\
      \tau_z &= b_3 \sin^2\left(\frac{\theta}{2}\right)
      \end{align*}
      and the transformation matrix elements:
      \begin{align*}
      T_{11} = T_{22} &= \frac{1}{2}\cos\left(\frac{\theta}{2}\right)[(b_3 + 1)\cos(\psi_+) - (b_3 - 1)\cos(\psi_-)]\\
      T_{12} = -T_{21} &= \frac{1}{2}\cos\left(\frac{\theta}{2}\right)[(b_3 - 1)\sin(\psi_-) - (b_3 + 1)\sin(\psi_+)]\\
      T_{13} &= \sin\left(\frac{\theta}{2}\right) \cos(\chi) \left[b_1 \cos(\xi) - b_2 \sin(\xi)\right]\\
      T_{23} &= \sin\left(\frac{\theta}{2}\right) \cos(\chi) \left[b_1 \sin(\xi) + b_2 \cos(\xi)\right]\\
      T_{31} &= -\frac{1}{2}\sin(\theta)[b_1 \cos(2\varphi_1) + b_2 \sin(2\varphi_1)]\\
      T_{32} &= \frac{1}{2}\sin(\theta)[b_1 \sin(2\varphi_1) - b_2 \cos(2\varphi_1)]\\
      T_{33} &= \cos^2\left(\frac{\theta}{2}\right)\end{align*}.
      For the XX-type Hamiltonians studied in the main text, where $H_{SE} = \frac{\alpha}{2}(\sigma_X \otimes \sigma_X + \sigma_Y \otimes \sigma_Y)$ with matched local energy scales, the parameters simplify to $\varphi_0 = -\varphi_2$ and $\varphi_1 = \alpha = 0$. Under these constraints: - $\xi = -\frac{\varphi_0}{2}$ - $\psi_+ = -\varphi_0$, $\psi_- = \varphi_0$ - $\chi = 0$ This reduction yields the phase-covariant maps analyzed in Appendix \ref{App5:fulldynmap}, demonstrating how the general framework encompasses the specific cases relevant to quantum thermodynamics.\\
      This general parametrization reveals that breaking phase-covariance requires either: (i) Non-zero $b_1, b_2$ components enable non-trivial shift vectors $\tau_x, \tau_y$  or (ii) Unequal phase parameters $\varphi_0 \neq \varphi_2$ or $\alpha \neq 0$ that can generate coupling between populations and coherences. 

 \section{Full dynamical maps calculation
}
\label{App5:fulldynmap}
\begin{equation}
\Phi_E=\left(
\begin{array}{cccc}
 1 & 0 & 0 & 0 \\
 \frac{J s_\theta (c_{31} s_{h_+}+c_{32} c_{h_+})}{2 \theta } & \frac{1}{2} \left(c_\theta c_{h_+}-\frac{h_- s_\theta s_{h_+}}{\theta }\right) & -\frac{h_- s_\theta c_{h_+}+\theta c_\theta s_{h_+}}{2 \theta } & \frac{J s_\theta (b_{1} s_{h_+}+b_{2} c_{h_+})}{2 \theta } \\
 \frac{J s_\theta (c_{32} s_{h_+}-c_{31} c_{h_+})}{2 \theta } & \frac{h_- s_\theta c_{h_+}+\theta c_\theta s_{h_+}}{2 \theta } & \frac{1}{2} \left(c_\theta c_{h_+}-\frac{h_- s_\theta s_{h_+}}{\theta }\right) & \frac{J s_\theta (b_{2} s_{h_+}-b_{1} c_{h_+})}{2 \theta } \\
 \frac{s_\theta ^2 \left( 2J ^2 b_{3}+2 J h_- (c_{11}+c_{22})\right)+ J \theta (c_{21}-c_{12}) s_{2\theta}}{4 \theta ^2} & \frac{J s_\theta (b_{1} h_- s_\theta-b_{2} \theta c_\theta)}{2 \theta ^2} & \frac{J s_\theta (b_{1} \theta c_\theta+b_{2} h_- s_\theta)}{2 \theta ^2} & \frac{ \theta ^2+c_{2\theta}J ^2+h_-^2}{4 \theta ^2} \\
\end{array}
\right)
\label{eqn:GenMap}
\end{equation}
where $h_-=h_1=h_2, h_+=h_1+h_2,\theta=\sqrt{J^2 + h_- ^2},  s_\theta = \sin(\theta),c_\theta = \cos(\theta),s_{h_+} = \sin(h_+),c_{h_+} = \cos(h_+),$and $s_{2\theta} = \sin(2\theta)$.
\subsection{Thermal operations and Phase-covariant dynamics}
For phase-covariant dynamics, we impose the following: (i) $[H_{SE},H_S+H_E]\implies h_1=h_2=h\implies h_- = 0,  h_+ = 2h \text{ and } \theta=J$, (ii) $[H_E,\rho_E] = 0 \implies b_1 = b_2 = 0$ and (iii) The system and environment are initially uncorrelated $\implies c_{ij}=0 \hspace{0.2em} \forall i,j = 1,2,3 $. These simplifications result in the following map
\begin{equation}
    \Phi_{PC} ^{\textcolor{magenta}{\bullet}}=\left(
\begin{array}{cccc}
 1 & 0 & 0 & 0 \\
 0 & \frac{1}{2} c_\theta c_{h_+} & -\frac{1}{2} c_\theta s_{h_+} & 0 \\
 0 & \frac{1}{2} c_\theta s_{h_+} & \frac{1}{2} c_\theta c_{h_+} & 0 \\
 \frac{1}{2} b_{3} s_{\theta} ^2 & 0 & 0 & \frac{c_\theta ^2}{2} \\
\end{array}
\right)
\end{equation}
The fixed state of this map is given by $\rho_{PC} ^* = \frac{1}{2}(\mathbf{1}+b_3\sigma_Z)$. This type of map is used in the study of thermalization of a single qubit and is called the homogenizer map \cite{ziman2001quantumhomogenization}. In certain limits, the map can become noise maps for amplitude damping and dephasing.\\
The correlations $c_{12} $ and $c_{21}$ are not a resource for this interaction since they can be generated under the resource theory where phase-covariant maps are the allowed free operations. Consequently, there presence does not affect the structure of the map, however their presence can affect the complete positivity of the map and the domain of positivity.\\
Since the evolution of the z-component--the population--is decoupled with the evolution of the x-y components--coherences--, these operations can never generate coherence between the $|0\rangle$ and $|1\rangle$ state of the system qubit for a system starting with a z-Bloch vector. The allowed operations restrict the class of reachable states. However, if system is allowed to interact with an environment/ancillary qubit that has coherences, i.e. $b_1, b_2 \neq 0$, e.g. $\rho_E = |+\rangle\langle+|,$ then under the same allowed operations, the system qubit can generate coherence between the $|0\rangle$ and $|1\rangle$ state of the system. In this way, a $|+\rangle$ is a resource for the theory since it expands the reachable set of states/transformations. Such a map would look like\\
\begin{equation}
\bar{\bar{\Phi}}^{\textcolor{black}{ \bullet}}=\left(
\begin{array}{cccc}
 1 & 0 & 0 & 0 \\
  0 & \frac{1}{2} c_\theta c_{h_+} & -\frac{1}{2} c_\theta s_{h_+} & \frac{1}{2} s_{\theta} (b_{1} s_{h_+}+b_{2} c_{h_+}) \\
 0 & \frac{1}{2} c_\theta s_{h_+} & \frac{1}{2} c_\theta c_{h_+} & \frac{1}{2} s_{\theta} (b_{2} s_{h_+}-b_{1} c_{h_+}) \\
 \frac{1}{2} s_{\theta}^2 b_{3}  & -\frac{1}{4} b_{2} s_{2\theta} & \frac{1}{4} b_{1} s_{2\theta} & \frac{c_\theta^2}{2} \\
\end{array}
\right)
\end{equation}
While this map allows for coherence generation, this map is no longer a thermal-operation nor is it Gibbs-preserving. This is because the action of the map on the Gibbs state would not be identity and would generate coherences when acted upon a Bloch vector $[0,0,r_G]^T$. The map fails to be a thermal operation because the environment now has coherence and is not in a Gibbs state. The map is however, still a covariant operation for the full system-environment composite.\\

\subsection{Thermal operations with system environment correlations}
We now consider the case where the following holds: (i) $[H_{SE},H_S+H_E]\implies h_1=h_2=h\implies h_- = 0,  h_+ = 2h \text{ and } \theta=J$ and (ii) $[H_E,\rho_E] = 0 \implies b_1 = b_2 = 0$. However, this time let the system and environment start with all possible two-body correlations. This implies introduction of resources. Correlations like $c_{32}$ and $c_{31}$ are resources because they cannot be generated freely by the allowed operations. The resulting map is given by,
\begin{equation}
    \bar{\bar{\Phi}}^{\textcolor{black}{ \bullet}} = \left(
\begin{array}{cccc}
 1 & 0 & 0 & 0 \\
 \frac{1}{2} s_\theta (c_{31} s_{h_+}+c_{32} c_{h_+}) & \frac{1}{2} c_{\theta} c_{h_+} & -\frac{1}{2}c_{\theta} s_{h_+} & 0 \\
 \frac{1}{2} s_\theta (c_{32} s_{h_+}-c_{31} c_{h_+}) & \frac{1}{2} c_{\theta} s_{h_+} & \frac{1}{2} c_{\theta} c_{h_+} & 0 \\
 \frac{1}{2} s_\theta (b_{3} s_\theta+2i\mathcal{I}(c_{21}) c_{\theta}) & 0 & 0 & \frac{c_{\theta} ^2}{2} \\
\end{array}
\right)
\end{equation}
Therefore, system-environment correlations are a resource for the system since they allow for asymmetric, non-zero shift in the x-y direction. The presence of the shift breaks the isotropic transformation of the x-y Bloch vector under a thermal operation. The fixed state associated with this map has non-zero Bloch vector components along all the three directions implying that the state has non-zero coherences.\\
However, this map is not a Gibbs-Preserving since the non-zero shift is not canceled out by non zero mixing between the z-x and z-y Bloch vectors, resulting in a fixed point that is not a Gibbs-like state. Therefore, this map while energy preserving and starting with $\rho_E = \frac{1}{2}(\mathbf{1}+b_3\sigma_z)$ is not a thermal operation or a Gibbs-preserving map. \\
\subsection{Correlation as a resource for single-shot Gibbs operations}
We have seen above that coherence in the environment/ancillary qubit $|+\rangle$ or system-environment correlations alone are not enough to generate a Gibbs-preserving map. While they can generate maps that have coherences, they do not have the Gibbs state as a fixed point. This is because for a Gibbs-preserving map we need the shift component of the affine map to cancel out the effect of the $T_{i3}$ component of the rotation part of the affine map. \\
To go beyond thermal operations to Gibbs-preserving maps, we need a map that has non-zero terms in the rotation matrix $T$ and has non-zero shift vector terms $\vec{\tau}$. We consider a scenario where the evolution is energy preserving, however the system has access to ancillary with $b_1 , b_2\neq0$ state and non-zero correlations terms $c_{23}, c_{13} \neq0$. Furthermore, we demand that 
\begin{align}
    \tau_i &= - r_G T_{i3} \hspace{0.5em} i= 1,2\\
    \tau_3 &=  r_G (1-T_{33} )
\end{align}
Under these conditions, we get a reduced form of the dynamical map $\Phi$ in equation \ref{eqn:GenMap}. The constraints on $\tau_i$ end up constraining the correlations to take specific values. In particular, 
\begin{align*}
    c_{13}&=-b_1 r_G\\
    c_{23}&= -b_2 r_G\\
    \mathcal{I}(c_{12})&= \frac{1}{2}\left(-\frac{1 + \sin^2(\theta)}{\sin(\theta)\cos(\theta)} \cdot r_G + b_3\tan(\theta)\right)
\end{align*}
\begin{equation}
    \bar{\Phi}_{GP}^{\textcolor{cyan}{\bullet}}=\left(
\begin{array}{cccc}
 1 & 0 & 0 & 0 \\
 -\frac{1}{2} r_G s_\theta \Delta_1 & \frac{1}{2} c_\theta c_{h_+} & -\frac{1}{2} c_\theta s_{h_+} & \frac{1}{2} s_\theta \Delta_1 \\
 \frac{1}{2} r_G s_\theta \Delta_2 & \frac{1}{2} c_\theta s_{h_+} & \frac{1}{2} c_\theta c_{h_+} & \frac{1}{2} s_\theta \Delta_2 \\
 r_G-\frac{1}{2} r_G c_\theta^2 & -\frac{1}{2} b_{2} s_\theta c_\theta & \frac{1}{2} b_{1} s_\theta c_\theta & \frac{c_\theta^2}{2} 
\end{array}
\right)
\end{equation}
where $\Delta_1 = (b_{1} s_{h_+}+b_{2} c_{h_+})$ and $\Delta_2 = (b_{1} c_{h_+}-b_{2} s_{h_+})$
The action of this map on a general state can mix all the three Bloch vector components as well as provide shift. The map has the Gibbs-state as the fixed point. However, this map is no longer phase-covariant. \\
This construction gives a Gibbs-preserving map, but it depends on having access to a system-ancilla correlated state with fine-tuned two-body correlations. Such two-body correlations are operationally hard to implement \cite{Fauseweh2024}. Below we present a protocol to generate such maps with initially uncorrelated qubits.
This map is an example of a map that while being Gibbs-preserving, does not obey the conditions for thermal operation and is not energy-conserving either.
\subsection{Three-qubit system}
We will consider the three-qubit Hamiltonian discussed in the paper. A schematic of the setup is shown in Fig. \ref{fig:three_qubit_schematic}.
This Hamiltonian is energy-preserving and consequently is still in the class of covariant operations on the composite. Furthermore, we let the environment be in the state $\rho_E = \frac{1}{2}(\mathbf{1}+b_3 \sigma_Z)$ which satisfies the commutation condition with the $H_E$. Lastly, since we start with initially uncorrelated input system-environment-resource state, we are still in the reasonable limit of systems being initially separable. This will imply that the operation is still a thermal operation except, it is thermal operations with respect to $\rho_S\otimes\rho_R$. Due to the presence of the coherence term in the resource qubit and the partial swap-like interaction, the coherence from the resource will be used up by the input system to generate Gibbs-preserving dynamics with respect to a thermal state $\rho_T = \frac{1}{2}(\mathbf{1}+r_G \sigma_Z)$.\\
Below is the dynamical map for the dynamics. Here, we have redefined some variables for cleaner representation. The compact notations represent $J' = \sqrt{2} J$, $s_J = \sin(\sqrt{2}J)$, $c_J = \cos(\sqrt{2}J)$, $s_{2J} = \sin(2\sqrt{2}J)$, $c_{2J} = \cos(2\sqrt{2}J)$, $s_{4J} = \sin(4\sqrt{2}J)$, $c_{4J} = \cos(4\sqrt{2}J)$, $s_{2h} = \sin(2h)$, and $c_{2h} = \cos(2h)$. We further define $A = 4(b_3 f_3 + 1)c_{2J} - (b_3 f_3 - 1)(c_{4J} + 3)$, $\Omega_1 = f_1 s_{2h} + f_2 c_{2h}$, $\Omega_2 = f_2 s_{2h} - f_1 c_{2h}$, $B_1 = \sqrt{2}b_3 s_J^3 c_J$, $B_2 = \sqrt{2}s_J c_J^3$, $\phi_+ = \frac{1}{8}\cos{2 h}$, and $\phi_- =\frac{1}{8}\sin{2 h}$.
\begin{equation}
\Phi_{GP}^{\textcolor{cyan}{ \bullet}} =\begin{pmatrix}
1 & 0 & 0 & 0 \\
B_1\Omega_1 & A\phi_+ & A\phi_- & B_2\Omega_1 \\
B_1\Omega_2 & A\phi_- & A\phi_+ & B_2\Omega_2 \\
\frac{(b_3+f_3)s_{2J}^2}{2} & -\frac{f_2 s_{4J}}{2\sqrt{2}} & \frac{f_1 s_{4J}}{2\sqrt{2}} & c_{2J}^2
\end{pmatrix}
\end{equation}
In we let $f_1 = f_2 = 0$ this map becomes phase-covariant and if we let $f_1, f_2 \neq 0$ but unconstrained, it gives a globally energy preserving map $\Phi_E$. For the map to be Gibbs-preserving with respect to a thermal state with z-Bloch vector, $r_G$, the following must be satisfied,
\begin{align}
    B_1 = -r_G B_2 &\implies b_3 s_{J}^2= -r_Gc_J ^2 \\
    b_3 +f_3 &= 2r_G 
\end{align}
This means that for a given $b_3$, the interaction parameter $J = \frac{1}{\sqrt{2}}\arctan[\sqrt{\frac{-r_G}{b_3}}]$. For this to be a real solution, the allowed values of $r_G$ are ones that have z-Bloch vector in opposite direction of $b_3$. Once this is satisfied, the resource state parameters $f_1$ and $f_2$ can be arbitrary but non-zero. The parameter $f_3$ is constrained to be $f_3 = -b_3 (2\tan^2(J) +1)$. Along with these a global constraint of $f_i$ is that $|f|^2\leq1$. These constrain the allowed operations however provide a subset of Gibbs-preserving maps to be accessible.\\
Therefore, given access to a thermal operation at temperature $T'$ corresponding to the environment qubit z-Bloch vector $b_3$, we can get a Gibbs-preserving map that preserves a Gibbs-state with z-Bloch vector $r_G$ by tuning J to be such that $ r_G = -b_3 \tan^2(\sqrt{2}J)$ and introducing a resource qubit with $f_3 = -b_3 (1+2\tan^2(\sqrt{2}J))$. When $f_1 = f_2 = 0$, the map is trivially Gibbs-preserving since it falls back into the class of phase-covariant maps. Therefore, the coherence terms will acts as a switch to go between $\Phi_{PC}$ and $\Phi_{GP}$.\\
The eigenvector of this dynamical map with the eigenvalue 1 correpsonds to the fixed state of the map. The unconstrained map, with only global energy preservation, has the following fixed state
\begin{equation}
\begin{pmatrix}
    1 \\
    -\frac{\sin \left(2 \sqrt{2} J T\right) (2 B_1+B_2 (b_3+f_3)) (-A \Omega_1 \phi_++A \Omega_2 \phi_-+\Omega_1)}{\sqrt{2} B_2 \cos \left(2 \sqrt{2} J T\right) (f_1 (A \Omega_1 \phi_--A \Omega_2 \phi_++\Omega_2)-f_2 (A \Omega_2 \phi_-+\Omega_1)+A f_2 \Omega_1 \phi_+)+2 (A (\phi_--\phi_+)+1) (A (\phi_-+\phi_+)-1) \sin \left(2 \sqrt{2} J T\right)}\\
    \frac{\sin \left(2 \sqrt{2} J T\right) (2 B_1+B_2 (b_3+f_3)) (A \Omega_1 \phi_--A \Omega_2 \phi_++\Omega_2)}{\sqrt{2} B_2 \cos \left(2 \sqrt{2} J T\right) (-f_1 (A \Omega_1 \phi_-+\Omega_2)+A f_1 \Omega_2 \phi_++f_2 (-A \Omega_1 \phi_++A \Omega_2 \phi_-+\Omega_1))-2 (A (\phi_--\phi_+)+1) (A (\phi_-+\phi_+)-1) \sin \left(2 \sqrt{2} J T\right)}\\
    -\frac{\sqrt{2} B_1 \cos \left(2 \sqrt{2} J T\right) (f_1 (A \Omega_1 \phi_--A \Omega_2 \phi_++\Omega_2)-f_2 (A \Omega_2 \phi_-+\Omega_1)+A f_2 \Omega_1 \phi_+)+(b_3+f_3) (A (\phi_+-\phi_-)-1) (A (\phi_-+\phi_+)-1) \sin \left(2 \sqrt{2} J T\right)}{\sqrt{2} B_2 \cos \left(2 \sqrt{2} J T\right) (f_1 (A \Omega_1 \phi_--A \Omega_2 \phi_++\Omega_2)-f_2 (A \Omega_2 \phi_-+\Omega_1)+A f_2 \Omega_1 \phi_+)+2 (A (\phi_--\phi_+)+1) (A (\phi_-+\phi_+)-1) \sin \left(2 \sqrt{2} J T\right)}
\end{pmatrix}
\end{equation}
under the condition $f_1 = f_2 = 0$, the fixed state of the map is $\vec{v} = [1,0,0,\frac{b_3+f_3}{2}]$. 

\end{widetext}
\end{document}